\def\pathfig{figures}
\def\pathfig{.} 
\def\scalebox{#}{\input{\pathfig/1.pdf_t}}#2{\scalebox{#1}{\input{\pathfig/#2.pdf_t}}}
\definecolor{MyBlue}{RGB}{15,60,210}
\setlist[itemize]{label=\textbullet} 
\pgfplotsset{
  compat=newest,
  colormap={mycolormap}{color=(lightgray) color=(white) color=(lightgray)}
}
\newtheorem{claim}{Claim}
\newtheorem{lemma}{Lemma}
\newtheorem{theorem}{Theorem}
\newtheorem{property}{Property}
\newtheorem{proposition}{Proposition}
\def\qed{{}\hfill$\Box$\\}
\newenvironment{proof}{\noindent\textbf{Proof.}}{\qed}
\newenvironment{proofof}{\noindent\textbf{Proof of~}}{\qed}
\newcommand{\card}[1]{\left|{#1}\right|}
\newcommand{\set}[1]{\left\{{#1}\right\}}
\newcommand{\range}[2]{\set{{#1},\dots,{#2}}}
\newcommand{\olsi}[1]{\,\overline{\!{#1}}} 
\def\leq{\leqslant}\def\le{\leq}
\def\geq{\geqslant}\def\ge{\geq}
\def\eps{\varepsilon}
\def\PP{\PU_{\sigma,m}}%
\newcommand{\sw}{\mathrm{sewing}}
\newcommand{\PU}{\mathbb{\Pi}}
\newcommand{\cF}{\mathcal{F}}
\newcommand{\sF}{\mathscr{F}}
\newcommand{\sU}{\mathscr{U}}
\newcommand{\mS}{\Sigma}
\newcommand{\sW}{\mathscr{W}}
\newenvironment{myfigure}%
  {\begin{figure}[htb!]\centering}%
  {\end{figure}}
\let\oldsection\section
\def\mysection#1{\oldsection[\texorpdfstring{#1}{}]{\boldmath{#1}}}
\def\mysectionstar#1{\oldsection*{\boldmath{#1}}}
\def\section{\@ifstar{\mysectionstar}\mysection}
\let\oldsubsection\subsection
\def\mysubsection#1{\oldsubsection[\texorpdfstring{#1}{}]{\boldmath{#1}}}
\def\mysubsectionstar#1{\oldsubsection*{\boldmath{#1}}}
\def\subsection{\@ifstar{\mysubsectionstar}\mysubsection}
\tikzstyle{polyEdge}=[draw=red,thick, ->]
\tikzstyle{polyVertex}=[fill=white, draw=black, shape=circle, minimum size=5pt, inner sep=0pt]
\tikzstyle{smallNode}=[fill=black, draw=black, shape=circle, inner sep=0pt, minimum size=5pt]
\tikzstyle{gateNode}=[fill=gray, draw=gray, shape=circle, inner sep=0pt, minimum size=5pt]
\tikzstyle{psurface}=[-, draw=gray, fill=gray, opacity=.1]
\tikzstyle{psurfaceLine}=[-, draw=gray]
\tikzstyle{phole}=[-, fill=white, draw=gray]
\tikzstyle{tinyNode}=[fill=black, draw=black, shape=circle, inner sep=0pt, minimum size=2pt]
\tikzstyle{tinygateNode}=[fill=gray, draw=gray, shape=circle, inner sep=0pt, minimum size=2pt]
\title{\textbf{Minor-Universal Graph for Graphs on Surfaces}}
\author{
  Cyril Gavoille\thanks{\textsf{gavoille@labri.fr}. This work is partially funded by the French ANR project ANR-22-CE48-0001 (TEMPOGRAL).}\\
  LaBRI\\
  University of Bordeaux\\
  France
  \and
  Claire Hilaire\thanks{\textsf{claire.hilaire@uca.fr}}\\
  LIMOS\\
  University of Clermont-Auvergne\\
  France
}
\begin{document}

\maketitle

\begin{abstract}
  We show that for integer every $n$ and every surface $\mS$, there is a graph embeddable on $\mS$ with at most $c n^2$ vertices that contains as a minor every $n$-vertex graph embeddable on $\mS$. 
  The constant $c$ depends polynomially on the Euler genus of $\mS$. This generalizes a well-known result for planar graphs by Robertson, Seymour, and Thomas [\textit{Quickly Excluding a Planar Graph.} J. Comb. Theory B, 1994], which states that the square grid on $4n^2$ vertices contains as a minor every $n$-vertex planar graph, an important step in showing that graphs excluding that graphs excluding a planar graph as a minor have bounded tree-width.
  According to Gorsky, Seweryn, and Wiederrecht [\textit{Polynomial Bounds for the Graph Minor Structure Theorem} FOCS~'25], our construction provides the final key ingredient in the search for polynomial bounds in the decomposition of graphs excluding as minor a given $n$-vertex graph achieving tight bounds with respect to the Euler genus of the surface part.

  \medskip 

  \paragraph{Keywords:} Minor, Universal Graphs, Bounded Euler Genus
\end{abstract}

\section{Introduction}
\label{sec:intro}


Graph universality plays an important role in graph theory and in algorithm design. Rather than considering each graph of a given family individually, it is usually simpler to manipulate a small set of graphs --- or even a single graph --- whose properties sufficiently mirror those of the family.


For example, Lipton and Tarjan~\cite{LT80} showed that every $n$-vertex planar graph has a set of $O(\sqrt{n^{2-\alpha}}\,)$ vertices whose removal leaves the graph with connected components of at most $n^{\alpha}$ vertices\footnote{The original statement in~\cite[Theorem~3]{LT80} is formulated differently, as it allows each vertex to have a non-negative weight summing to one. It states that for every $\eps\in[0,1]$, there is a set of $O(\sqrt{n/\eps})$ vertices whose removal leaves components of cost at most $\eps$, which is at most $\eps n$ vertices assuming uniform cost $1/n$. Our formulation is obtained by setting $\eps = 1/n^{1-\alpha}$.}. Choosing $\alpha = 2/3$, this implies that every $n$-vertex planar graph is a subgraph of a graph obtained from a star by blowing up each vertex with a clique of $O(n^{2/3})$ vertices.
This Star-Blowup Theorem has numerous applications for algorithms and data structures for planar graphs. Along the same lines, the recent Fan-Blowup Theorem for planar graphs~\cite{DJMMW25} states that every $n$-vertex planar graph is a subgraph of a fan\footnote{A star whose leaves are connected by a path.} blown up by a clique of $O(\sqrt{n}\log^2{n})$ vertices. Both blowup theorems generalize to Euler genus-$g$ graphs by adding $O(\sqrt{gn}\,)$ vertices to the blowing-up clique.
Another example is the design of NCA-labeling schemes~\cite{AGKR04}, which assign a short label to each vertex of an $n$-vertex tree so that the nearest common ancestor of any two vertices can be inferred solely from their labels. 
Interestingly, Gawrychowski, Kuhn, {\L}opusza{\'n}ski, Panagiotou and Su~\cite{GKLPP18} showed how a specific tree with $n^c$ vertices containing all $n$-vertex trees as a topological minor can be used for the design of succinct and efficient NCA-labeling schemes, that is, with $(c+o(1))\log_2{n}$-bit labels where $c \approx 2.31757$. Similar connections are well-known between adjacency-labeling schemes and induced-universal graphs~\cite{KNR88,ADBTK17,DEGJx21}. Yet another example is the celebrated Grid Minor Theorem of Robertson and Seymour~\cite{RS86a}, which states that graphs excluding a fixed planar graph as a minor have bounded tree-width\footnote{The \emph{tree-width} of $G$ is a measure of how $G$ is close to a tree. It is the smallest $k$ such that $G$ is a subgraph of a chordal graph with maximum clique size $k+1$.}. This result relies on the fact that every planar graph is a minor of a relatively small grid. Thus, proving that the property holds when excluding a single grid suffices to prove that it holds when excluding any planar graph.

\paragraph{Our results.}

Motivated by the latter example, in this paper we focus on universality under minor\footnote{Recall that $H$ is a minor of $G$ if $H$ can be obtained from $G$ by vertex removals, edge removals, and edge contractions.} containment. More precisely, a graph $\sU$ is \emph{minor-universal} for a given graph family $\sF$ if every graph of $\sF$ is a minor of $\sU$. Since we require the graph $\sU$ to be small and to retain key structural properties of the graphs in $\sF$, we typically require $\sU$ and the graphs in $\sF$ to share a common property, such as planarity. Note that the Star-Blowup Theorem~\cite{LT80} and the Fan-Blowup Theorem~\cite{DJMMW25} do not serve this purpose, as the universal graphs of polynomial size they yield for planar graphs are far from planar, containing cliques of $\Omega(\sqrt{n}\,)$ vertices.

A key element of the Grid Minor Theorem~\cite{RS86a} is the fact that every planar graph $H$ is a minor of a grid whose size is bounded by a function of the size of $H$. More precisely, Robertson, Seymour and Thomas~\cite[Theorems~(1.3) \& (1.4)]{RST94} showed that the $(2n\times 2n)$-grid contains as minor every planar graph with $n$ vertices. In other words, the grid on $4n^2$ vertices is a planar minor-universal graph for the $n$-vertex planar graphs. 
This quadratic bound is essentially optimal, as long as grids are concerned, since Biedl, Chambers, Eppstein, de Mesmay, and Ophelders~\cite[Lemma~5]{BCEMO19} showed that any minor-universal grid for planar graphs must have $\Omega(n^2)$ vertices.

We generalize this result of minor-universality for planar graphs 
to graphs on surface of higher genus as follows:

\begin{theorem}\label{th:main}
  For every $n$ and every surface $\mS$ of Euler genus $g\geq 1$, there is a graph $\sU_{n,\mS}$ embedded on $\mS$ with $O(g^2(n+g)^2)$ vertices that contains as minor every graph embeddable on $\mS$ with $n$ vertices.
\end{theorem}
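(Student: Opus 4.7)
The plan is to adapt the grid construction of Robertson, Seymour and Thomas~\cite{RST94} to surfaces using a polygonal schema representation of $\mS$. The candidate universal graph $\sU_{n,\mS}$ will be an $N \times N$ planar grid, with $N = \Theta(g(n+g))$, drawn inside the canonical polygonal schema of $\mS$ (a $4\gamma$-gon for orientable genus $\gamma$, or a $2k$-gon for $k$ crosscaps), with the boundary sides identified according to the word describing $\mS$. This yields a graph embedded on $\mS$ with $\Theta(N^2) = O(g^2(n+g)^2)$ vertices, matching the target bound.

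To establish minor-universality, I would process an arbitrary graph $G$ embedded on $\mS$ with $n$ vertices as follows. First, I build a \emph{planarizing cycle system} inside $G$: a collection of $O(g)$ simple cycles whose removal leaves a graph embedded in a disk. Such a system arises naturally from a spanning tree of $G$ together with the fundamental cycles of the $g$ edges not belonging to a spanning cotree, so its total length is $O(gn)$. Cutting $\mS$ along this system produces a disk $D$ in which $G$ becomes a planar graph $G'$ with $O(g(n+g))$ vertices, and whose boundary $\partial D$ carries an identification pattern isomorphic to the schema of $\sU_{n,\mS}$.

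Next, I would apply a boundary-respecting strengthening of the grid theorem of~\cite{RST94}: the planar graph $G'$ is a minor of an $N \times N$ grid sitting inside $D$ in such a way that vertices of $G'$ landing on $\partial D$ are mapped to vertices of the grid's boundary in the matching cyclic order. Since the schema identification of the grid boundary coincides with that of $\partial D$, this boundary-aware embedding lifts to a minor embedding of $G$ inside $\sU_{n,\mS}$, completing the proof.

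The principal technical obstacle is this boundary-respecting refinement of~\cite{RST94}, whose original statement gives no control over where prescribed vertices are sent. I expect to address it either by reproving the grid theorem with boundary constraints (threading them through the inductive argument of~\cite{RST94}), or by equipping $\sU_{n,\mS}$ with an outer ``collar'' of extra rows and columns whose sole purpose is to route boundary vertices to their prescribed cyclic positions. A secondary obstacle is the non-orientable case, where some schema identifications reverse orientation; handling this will require care in matching the cyclic orders of $\partial D$ and of the grid boundary, but should fit within the same framework.
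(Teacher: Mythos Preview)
Your high-level strategy---cut the surface open along a planarizing system, embed the resulting planar piece in a grid while respecting the boundary, then sew up according to the polygonal schema---matches the paper's. However, both obstacles you flag are genuine, and your sketch does not resolve either of them; the paper's technical contribution is precisely their resolution.

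\textbf{Schema mismatch.} A tree--cotree cut system yields \emph{some} polygonal schema for $\mS$, but in general not the canonical one, so your assertion that ``$\partial D$ carries an identification pattern isomorphic to the schema of $\sU_{n,\mS}$'' is unjustified. Since $\sU_{n,\mS}$ must be a single graph independent of $G$, you need a cut system that both produces the canonical signature and has controlled intersection with $G$. The paper obtains this by invoking a result of Lazarus--Pocchiola--Vegter--Verroust (\cref{lem:crossing}): there is a \emph{canonical} system of $g$ loops on $\mS$ in which each loop crosses every edge of $G$ in $O(1)$ points. This yields a polygonal embedding $\Pi_G$ with the canonical signature and size $(m,n)$ with $m=O(n+g)$.

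\textbf{Boundary-respecting grid embedding.} Neither of your proposed fixes goes through as stated. The collar idea fails because applying \cite{RST94} as a black box to $G'$ gives no control over where the boundary vertices land inside the grid; there is no reason their cyclic order along $\partial D$ is reflected by any ordering in the grid, so disjoint routing to prescribed positions on the grid boundary is in general impossible. The paper instead proves a polygonal analogue of the two steps of \cite{RST94}. First (\cref{lem:outer}), any polygonal embedding of size $(m,n)$ is a p-minor of an \emph{outerplanar} polygonal embedding of size $(m+2n,0)$; this is the boundary-respecting version of ``every planar graph is a minor of a Hamiltonian planar graph on twice as many vertices'', and its proof requires several new operations (blowing up a spanning forest, splitting anchors, a twin-splitting step to keep paired sides of equal length, and a swapping step that transfers internal paths to the twin side). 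Second (\cref{lem:final}), any outerplanar polygonal embedding of size $(m,0)$ is a p-minor of a fixed half-grid $\PU_{\sigma,m}$ whose diagonal is the outerface; because all vertices already lie on the boundary in the correct cyclic order, the grid embedding respects it automatically.
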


In other words, $\sU_{n,\mS}$ is a minor-universal graph for $n$-vertex graphs embeddable on $\mS$. Moreover, $\sU_{n,\mS}$ has $O(n^2)$ vertices for every surface $\mS$ of constant Euler genus. In fact, as we will see more precisely in \cref{th:tec}, 
the graph $\sU_{n,\mS}$ depends on $\mS$ only through its Euler genus $g$ and its orientability. Moreover, the embedding of the minor is preserved in $\sU_{n,\mS}$, i.e., it is a surface minor of $\sU_{n,\mS}$. 

Following the observation used by Robertson et al.~\cite{RST94} for planar graphs, \cref{th:main} readily extends to graphs on any surface with multiedges and loops. Indeed, by subdividing each edge of the input graph $G$ twice, we obtain a simple graph that contains $G$ as a minor, is embeddable on $\mS$, and has $m = |V(G)| + 2|E(G)|$ vertices. The resulting graph is therefore a minor of $\sU_{m,\mS}$, and so is $G$. Furthermore, the result extends to induced minor containment, since our graph $\sU_{n,\mS}$ has maximum degree at most\footnote{Obviously, we could make the graph sub-cubic by blowing up each vertex by a cycle of length its degree, increasing by a factor at most~$6$ the number of vertices of $\sU_{n,\mS}$.}~$6$. Thus, subdividing each edge yields a graph that contains every $n$-vertex graph embeddable on $\mS$ as an induced minor, with asymptotically the same number of vertices.

Our proof is constructive: both the minor-universal graph $\sU_{n,\mS}$ and the minor witness in $\sU_{n,\mS}$ can be constructed in polynomial time.
As a byproduct, our approach provides an alternative, self-contained proof for the minor-universal grid construction for planar graphs from~\cite{RST94}, without relying on Whitney's Theorem (see \cref{prop:altRST}).

In fact, our result holds in a more general setting. 
As we will see in \cref{sec:prelim}, every surface can be characterized thanks to a signature corresponding to a polygonal schema and a cut-graph of the surface. 
A surface can be associated with several signatures, each corresponding to a polygonal schema, but each signature $\sigma$ defines a unique surface $\mS_{\sigma}$ up to homeomorphism. 
The proof of \cref{th:main} relies on our technical result, which works for any signature (cf. \cref{th:tec} in \cref{sec:main}).
The final step, from \cref{th:tec} to \cref{th:main}, relies on results due to \cite{LPVV01,FHdM24} that bound the number of intersections between a graph embedded on a surface and a cut-graph of that surface associated to its canonical signature.
Therefore, the construction of our minor-universal graph in \cref{th:main} is related to these intersection numbers for canonical signatures (see \cref{lem:crossing_systemloops} in \cref{sec:main}).
As pointed out by Gorsky, Seweryn and Wiederrecht~\cite{GSW25a,GSW25} in their new proof of the Graph Minor Structure Theorem~\cite{RS03}, \cref{lem:crossing_systemloops} can be replaced by a more general one due to \cite{Negami01,FHdM24} that bounds these intersection numbers for any cut-graph, and thus any signature (see \cref{lem:crossing_cutgraph} in \cref{sec:main}).
The resulting bound on the minor-universal graph is weaker, but it holds for every signature $\sigma$, i.e., every surface $\mS_{\sigma}$ with signature $\sigma$ (see \cref{th:main2} below).
This extension is a key to the proof of the polynomial bounds for the Graph Minor Structure Theorem of Gorsky et al.~\cite{GSW25a,GSW25}, according to the authors.


We denote by $\mS_{\sigma}$ the surface obtained from the polygonal schema of signature $\sigma$, which consists of a $2|\sigma|$-regular polygon. See \cref{sec:prelim} for details on these concepts. 

\begin{theorem}\label{th:main2}
  For every $n$ and every signature $\sigma$, there is a graph $\sU_{n,\sigma}$ embedded on $\mS_{\sigma}$ with $O(|\sigma|^4(n+|\sigma|)^2)$ vertices that contains as minor every graph embeddable on $\mS_{\sigma}$ with $n$ vertices.
\end{theorem}

In a sense, \cref{th:main2} is more general than \cref{th:main} as it applies to any signature, not only the canonical one. However, \cref{th:main2} produces minor-universal graphs that are larger by a factor $\Omega(g^2)$ w.r.t. \cref{th:main}, even for the canonical signature.  

\paragraph{Impact on Graph Minor Theory.}

Graph minor Theory centers around two fundamental results: the Grid Minor Theorem (GMT~\cite{RS86a}) and the Graph Minor Structure Theorem (GMST~\cite{RS03}).

As mentioned above, the GMT states that graphs excluding a planar graph as a minor have bounded tree-width. More precisely, every graph of tree-width at least $f(n)$ contains every $n$-vertex planar graph as a minor. The original enormous bound on $f(n)$ has been reduced to $2^{O(n^2)}$ by Robertson, Seymour, and Thomas~\cite{RST94}, eventually to a polynomial one \cite{CT21} with the bound $f(n) \le n^{98 + o(1)}$ (see therein the references for the long sequence of improvements).
Here again, for the GMT it suffices to prove that graphs of tree-width at least $f(n)$ contain as a minor a small minor-universal graph for $n$-vertex planar graphs, namely the $(2n \times 2n)$-grid.
Interestingly, Wollan~\cite[Conjecture~7.3]{Wollan22} proposed an extension of the above GMT statement where ``planar graph'' is replaced by ``graph on surface'', and ``tree-width'' by ``face-width\footnote{Also known as \emph{representativity} introduced by Robertson and Seymour~\cite{RS88}, the face-width measures how densely a graph is embedded on $\mS$. It is the smallest number $k$ such that $\mS$ contains a non-contractible closed curve that intersects the graph in $k$ points, cf.~\cite[Chp.~5]{MT01}.}''.
More precisely, he conjectured that every graph embedded on a surface $\mS$ of Euler genus $g$ with face-width at least some polynomial $p(n,g)$ contains as a minor every $n$-vertex graph embedded on $\mS$. 
His conjecture was confirmed by Gorsky et al.~\cite[Theorem~15.11]{GSW25}, in which our minor-universal graph for graphs on surfaces plays the role played by the grid in the GMT.

The second fundamental theorem, the GMST, describes the structure of graphs excluding a given $n$-vertex graph $H$ as a minor. Specifically, such graphs admit a tree-decomposition whose \emph{torsos}\footnote{They are subgraphs induced by the bags of the tree-decomposition, where each intersection with any other bag is completed by a clique.} are $t(n)$-almost embeddable on surfaces on which $H$ cannot be embedded.
Roughly speaking, a graph is $t(n)$-almost embeddable on $\mS$ if it can be embedded on $\mS$ after deleting $t(n)$ vertices (called \emph{apices}), and deleting certain vertices confined on the boundary of at most $n^2$ faces of the embedding (called \emph{vortices}), which in turn have path-width\footnote{It is the smallest $k$ such that the graph is a subgraph of an interval graph with maximum clique size $k+1$.} at most $t(n)$.
In their original proof, Robertson and Seymour~\cite{RS03} did not provide an explicit bound $t(n)$, and establishing explicit, near-optimal bounds has remained a challenge ever since.
Kawarabayashi, Thomas, and Wollan~\cite{KTW21} made a major step forward in this direction. They provided a new, fully constructive proof of a variant of the GMST, at the expense of losing surface ``tightness''. In their variant, the ``surfaces where $H$ does not embed'' are replaced by ``surfaces of Euler genus $O(n^2)$''.
Thilikos and Wiederrecht~\cite{TW24} overcame this limitation, obtaining a fully constructive proof of the GMST (with surfaces of tight Euler genus) by employing\footnote{Technically they needed to use \cref{th:main2}, which was not online in~\cite{GH23v1} at the time of the writing of their paper. So they made a construction from the minor-universal graph in \cref{th:main} to get a minor-universal graph as the one in \cref{th:main2}.} \cref{th:main}. According to the authors, this was precisely the missing piece that prevented Kawarabayashi et al.~\cite{KTW21} from obtaining tight surfaces.
Unfortunately, their bound on $t(n)$ is still exponential, namely $2^{n^{O(1)}}$. Eventually, still with our construction given by \cref{th:main2}, Gorsky et al. \cite{GSW25a,GSW25} gave a full proof of GMST with polynomial bound of $t(n) = O(n^{2300})$.

\paragraph{Related results on minor-universal graphs.}

For trees, Bodini~\cite{Bodini02} proved that the size of the smallest minor-universal tree for $n$-vertex trees lies between $\Omega(n\log{n})$ and $O(n^{1.985})$.
In fact, the result of Gawrychowski et al.~\cite{GKLPP18} implies that this smallest minor-universal tree has between $\Omega(n^{1.724})$ and $O(n^{1.894})$ vertices\footnote{For these results, rooted binary trees and topological minors were considered. However, this applies to our setting since, up to a factor of two, the rooted and unrooted versions are equivalent, and minor and topological minor containment are equivalent in binary trees.}.
From computational perspectives, the smallest minor-universal tree of two bounded degree trees can be computed in cubic time~\cite{NRT00}, whereas deciding whether a tree is a minor of another is NP-complete. 
Minor universality has also been investigated for infinite graphs~\cite{DK99a,HMSTW21,Lehner22}. 
However, as noticed by Diestel and K{\"u}hn~\cite{DK99a}, the problem already becomes different in the planar case since the infinite grid is no longer minor-universal for infinite planar graphs. The infinite planar graphs having the infinite grid as minor-universal graph have been characterized by K{\"u}hn~\cite{Kuhn01}.

\paragraph{Discussion and technical issues.}

At first glance, the quadratic size of our grid-like construction is not surprising, particularly for surfaces of low genus—such as the real projective plane, the torus, or the Klein bottle, all having Euler genus $g \le 2$.
Indeed, such surfaces can be obtained from a square grid by gluing its sides. 
Thus, embedding the input $n$-vertex graph $G$ into this wrapped grid should be achievable within $O(n^2)$ grid points using standard graph drawing algorithms~\cite{dFPP88,TT89,Schnyder90}, provided these algorithms are suitably adjusted since, for grid minor embeddings, edges must be routed along grid paths (as in orthogonal drawings) rather than straight-line segments of arbitrary slope.

However, there are several technical issues with this approach. First, the relation between the minimum grid-drawing area for a planar $G$ and the size of the minimum grid containing $G$ as minor is not well understood. 
To the best of our knowledge, it is only known that $G$ is a minor of a grid of size $O(A^{3/2})$ if $G$ has a polyline grid-drawing of area $A$, as proved by Dieng and Gavoille~\cite{DG20}. At best, this approach yields an $O(n^3)$ bound for such grid-like graphs when the Euler genus satisfies $g \le 2$.

Secondly, even if a linear relation is conjectured, drawing graphs on surfaces of higher genus ($g \gg 2$) remains challenging.
In this context, drawing a graph of Euler genus $g$ means embedding an unfolded version of the graph without edge crossings, such that its vertices and edges lie on integer grid points inside a convex $2g$-gon representing a polygonal schema of the surface.
For orientable surfaces, Duncan, Goodrich, and Kobourov~\cite{DGK10} proposed an algorithm that may produce straight-line drawing with an area exponential in $n$. 
This exponential bound has been improved to polynomial by Chambers, Eppstein, Goodrich, and L{\"o}ffler~\cite{CEGL12} with $O(g^4 n^6)$ area drawings\footnote{The original result is stated differently, in terms of minimum angular resolution, but this is equivalent.}.
However, we are unaware of any graph drawing results for non-orientable surfaces, even with exponential area bounds. It is important to realize that the edges of an embedded graph may cross many times the sides of the polygonal schema of the surface on which the graph is embedded. One of the key points of drawing algorithms such as those in \cite{DGK10} and \cite{CEGL12} is the design of polygonal schemas whose sides are chord-free paths and with as few vertex repetitions as possible. Until recently, it was only known how to bound the number of side intersections per edge in orientable surfaces. The difficulty in bounding this number of crossings for non-orientable surfaces relies on a conjecture of Negami~\cite{Negami01} on the joint crossing number of two embeddable graphs. The conjecture has been solved recently by Fuladi, Hubard, and de Mesmay~\cite{FHdM22,FHdM24}, in the particular case where one of the two graphs is a one-vertex graph (a loop system), a feature we will use in \cref{lem:crossing_systemloops}.

\paragraph{Organization of the paper.}

In \cref{sec:prelim}, we give the definitions and notations necessary to manipulate graphs on surfaces, with the introduction of \emph{polygonal embedding}, which can be seen as a plane representation of graphs on surface. In \cref{sec:main}, we give our main technical theorem (\cref{th:tec}), which is a minor-universal result for polygonal embeddings.
We then show how this key ingredient implies \cref{th:main,th:main2}, and how its proof relies on two lemmas. Those lemmas are then proved in \cref{sec:proofs}.


\section{Preliminaries}
\label{sec:prelim}

\paragraph{Surfaces.}

In our context, a \emph{surface} is a  compact connected 2-manifolds without boundary.
The Classification Theorem for surfaces (see~\cite[\S6.4]{GX13}) implies that such surfaces can be completely classified, up to homeomorphism, using two topological invariants: \emph{orientability} and \emph{Euler genus}.
A surface is \emph{non-orientable} if traversing a closed loop reverses orientation (changing ``clockwise'' to ``counterclockwise''); otherwise, it is \emph{orientable}.
Its Euler genus is a nonnegative integer that quantifies the complexity of the surface in terms of handles and cross-caps.
It is $2h+c$ if the surface is homeomorphic to a sphere augmented with $h$ handles and $c$ cross-caps.
Adding a handle to a sphere consists in removing two disjoint open disks of the sphere and in gluing to their boundaries a cylinder by its two cycle boundaries.
Adding a cross-caps consists in removing one disk of the sphere and gluing a Möbius strip by its boundary. 
The surface is non-orientable if and only if it contains at least a cross-cap.
In fact, by the Classification Theorem, any Euler genus-$g$ surface can be obtained by adding to a sphere $g/2$ handles (if orientable) or by $g$ cross-caps (if non-orientable).
The sphere and the torus are examples of orientable surfaces of Euler genus~0 and~2, respectively, whereas the real projective plane and the Klein bottle are non-orientable surfaces of Euler genus~1 and~2, respectively. We refer the reader to \cite{Stillwell93,GX13,CdV21} for more precise definitions about surfaces.

\paragraph{Polygonal schema.}

Consider any surface $\mS$. There exist infinitely many polygons from which $\mS$ can be constructed, up to homeomorphism, by gluing their sides in pairs. For instance, the sphere is homeomorphic to a pyramid or a cube that can be obtained by gluing the sides of a hexagon and a 14-gon respectively. For any such polygon~$P$ (so with an even number of sides), the way the pairs of sides of~$P$ are glued together to form $\mS$ can be described by a \emph{signature}, a word associating clockwise one symbol (either positive or negative) with each side of $P$. 
Each side of $P$ is oriented either clockwise if its symbol is positive, or counter-clockwise otherwise. 
Two sides with the same (or opposite) symbol in the signature are glued together with respect to their orientation, creating a twist (or cross-cap) in the surface if they are opposite, and simply a handle if they are not opposite.
The polygon $P$ with its signature form a \emph{polygonal schema} of $\mS$.
After this gluing process, sides and vertices of $P$ form a graph (potentially with multi-edges and loops) embedded on $\mS$ without edge crossings, called \emph{cut-graph} of $\mS$. Obviously, cutting $\mS$ along the edges of this cut-graph results in $P$, each edge corresponding to two sides of~$P$. More generally, a cut-graph of $\mS$ is any graph embedded on $\mS$ such that its cut results into a surface homeomorphic to a disc, and thus to some polygon.

A \emph{canonical system of loops} for $\mS$ is a specific cut-graph with one vertex and~$g$ loops, where~$g$ is the Euler genus of $\mS$. Moreover, loops are either all two-sided if $\mS$ is orientable, or all one-sided if $\mS$ is non-orientable.
For $g \ge 1$, the \emph{canonical signature}, that is, the signature obtained by cutting $\mS$ along this canonical system of loops, is $a_1 a_2 \olsi{a}_1 \olsi{a}_2 \dots a_{g-1} a_{g} \olsi{a}_{g-1} \olsi{a}_g$ if $\mS$ is orientable, and $a_1 a_1 \dots a_g a_g$ if $\mS$ is non-orientable. 
For the sphere, $g = 0$, the signature is\footnote{Sometimes this is fixed to $a_0\olsi{a}_0$.} $\eps$, the empty word. The canonical signature is always composed of exactly~$2g$ letters for all Euler genus $g\in \mathbb{N}$, whereas arbitrary signatures contain at least~$2g$ letters.

\paragraph{Polygonal embeddings.}

In order to adapt the construction of graphs on surfaces in the polygonal schema setting, we introduce the notion of polygonal embedding.
Consider a surface $\mS$ with a cut-graph $L$, and a graph $G$ embedded\footnote{We only consider simple graphs and cellular embeddings, i.e., drawing of the graph on the surface without edge crossings such that each face is homeomorphic to a disc.} on $\mS$.
Denote by $G\uplus L$ the graph embedded on $\mS$ obtained from the union of $G$ and $L$, and by adding a new vertex at each intersection point between $G$ and $L$.
A \emph{polygonal embedding} of $G$ is the planar embedding $\Pi$ of $G$ and $L$ in the polygon $P$ obtained from $G \uplus L$ by cutting $\mS$ along the edges of $L$. (We refer to \cref{fig:schema} for an illustration.) 
The edges and vertices on $L$ appear duplicated on the boundary of the outerface of $\Pi$. More precisely, the boundary of the outerface of $\Pi$ is a cycle that can be cut into a clockwise sequence of paths sharing their extremities. 
These paths, called \emph{sides} of $\Pi$, correspond to the sides of the polygon $P$, and their extremities, called \emph{corners} of $\Pi$, corresponds to the vertices of $P$. Note that corners have degree two in $\Pi$. 
The ordered sequence of corners is called the \emph{border} of $\Pi$. The \emph{signature} of $\Pi$, denoted by $\sigma(\Pi)$, corresponds to the signature of $P$ and describes how the sides of $\Pi$ are reattached to form $G\uplus L$ on $\mS$. More precisely, the $i$th symbol of the signature is associated with the $i$th side of $\Pi$, i.e., the path between the $i$th and $(i+1)$th corners.
These sides are merged according to the orientation given by the symbols, and they are called \emph{twin sides} of $\Pi$. Obviously, twin sides must contain the same number of vertices. Furthermore, the vertices and edges that are identified in this process are called respectively \emph{twin vertices} and \emph{twin edges}.

\begin{myfigure}
  \begin{subfigure}[b]{0.55\textwidth}
      \begin{tikzpicture}[scale=0.5]
        %
        
        \begin{pgfonlayer}{foreground}
        \node (19) at (-7, 0) {};
        \node (20) at (7, 0) {};
        \node (21) at (-2.25, 0.25) {};
        \node (22) at (2.25, 0.25) {};
        \node (23) at (2, 0) {};
        \node (24) at (-2, 0) {};
        \node (30) at (1.5, -0.25) {};
        \node (31) at (4, -3.5) {};
        \node (32) at (-3, 0) {};
        \node (33) at (3, 0) {};
        \node [style=polyVertex] (0) at (1.75, -1) {};
        \node (70) at (1.5, -1.5) { \textcolor{black}{$r$}};
        \node (47) at (0, 1.8) { \textcolor{blue}{$\ell_1$}};
        \node (48) at (2.2, -2.6) { \textcolor{red}{$\ell_2$}};
        \node [style=smallNode] (1) at (-2.25, -1.5) {};
        \node [style=smallNode] (2) at (-1.75, -2.75) {};
        \node [style=smallNode] (3) at (-3.75, -2) {};
        \node [style=smallNode] (4) at (-0.5, -2) {};
        \node [style=smallNode] (5) at (-3.75, -1) {};
        \node [style=smallNode] (6) at (-3, -3) {};
        \node (55) at (-2, 0) {};
        \node (56) at (-1.5, -0.25) {};
        \node (57) at (-0.75, -0.5) {};
        \node (58) at (-4, -3.5) {};
        \node (59) at (-3, -3.75) {};
        \node (60) at (-1.75, -4) {};
        \node (61) at (0.5, -0.5) {};
        \node (62) at (1.25, -4) {};
        \node (64) at (4, 0) {};
        \node (65) at (5, 0) {};
        \node (66) at (6, 0) {};
        \node (67) at (-4, 0) {};
        \node (68) at (-5, 0) {};
        \node (69) at (-6, 0) {};
        \end{pgfonlayer}
        \draw [style=psurface, bend right=90] (19.center) to (20.center) to (19.center);
        \draw [style=phole] (24.center) to[in=225, out=-45, looseness=0.75] (23.center) to[bend right=90, looseness=0.50] (24.center);
        \draw [style=psurfaceLine, bend right=60, looseness=0.75] (21.center) to (22.center);
        \draw [style=polyEdge, -, dashed, in=15, out=-15, looseness=0.50] (31.center) to (30.center);
        \draw [style=polyEdge, -, bend right=15, looseness=0.75] (0) to (31.center);
        \draw [style=polyEdge, in=130, out=-150] (30.center) to (0);
        \draw [style=polyEdge, color=blue, in=-105, out=0, looseness=0.75] (0) to (33.center);
        \draw [style=polyEdge, color=blue, bend left=270, looseness=0.75] (33.center) to (32.center);
        \draw [style=polyEdge, color=blue, in=-165, out=-75, looseness=0.75] (32.center) to (0);
        \draw (3) to (1);
        \draw (1) to (2);
        \draw (2) to (3);
        \draw (3) to (6);
        \draw (6) to (2);
        \draw (2) to (4);
        \draw (4) to (1);
        \draw (1) to (5);
        \draw (5) to (3);
        \draw [in=90, out=0, looseness=0.50] (57.center) to (4);
        \draw [in=75, out=150, looseness=0.50] (56.center) to (1);
        \draw [in=45, out=135, looseness=0.75] (55.center) to (5);
        \draw [in=-180, out=-45, looseness=0.75] (6) to (60.center);
        \draw [in=-180, out=-90, looseness=0.75] (6) to (59.center);
        \draw [in=-15, out=-150, looseness=0.75] (6) to (58.center);
        \draw [dashed, in=-15, out=-180, looseness=0.50] (57.center) to (60.center);
        \draw [dashed, in=-60, out=-15, looseness=0.25] (56.center) to (59.center);
        \draw [dashed, in=150, out=-75, looseness=0.50] (55.center) to (58.center);
        \draw [in=165, out=60, looseness=0.75] (4) to (61.center);
        \draw [dashed, in=-135, out=-15, looseness=0.50] (61.center) to (62.center);
        \draw [in=-90, out=0, looseness=0.75] (4) to (64.center);
        \draw [bend left=270] (64.center) to (67.center);
        \draw [bend right=15] (67.center) to (5);
        \draw [in=-90, out=-15, looseness=0.75] (2) to (65.center);
        \draw [bend left=270] (65.center) to (68.center);
        \draw [in=165, out=-90] (68.center) to (5);
        \draw [bend left=270] (66.center) to (69.center);
        \draw [in=165, out=-90] (69.center) to (3);
        \draw [in=30, out=-90, looseness=0.75] (66.center) to (62.center);
        
      \end{tikzpicture}
  \end{subfigure}
  \hfill
  \begin{subfigure}[b]{0.4\textwidth}
    \begin{tikzpicture}[scale=0.6]
      \draw [style=psurface]  (-4, -3) rectangle (4, 3);
      \node [style=polyVertex] (c1) at (-4, -3) {};
      \node [style=polyVertex] (c2) at (-4, 3) {};
      \node [style=polyVertex] (c3) at (4, 3) {};
      \node [style=polyVertex] (c4) at (4, -3) {};
      \draw [style=polyEdge] (c3) -- (c4) node[midway,right=1pt]{ \textcolor{red}{$a_2$}};
      \draw [style=polyEdge] (c2) -- (c1)node[midway,left=1pt]{ \textcolor{red}{$\bar{a_2}$}};
      \draw [style=polyEdge, color=blue] (c2) -- (c3)node[midway,above=1pt]{ \textcolor{blue}{$a_1$}};
      \draw [style=polyEdge, color=blue] (c1) -- (c4)node[midway,below=1pt]{ \textcolor{blue}{$\bar{a_1}$}};
      \node [style=smallNode] (1) at (0, 1.5) {};
      \node [style=smallNode] (2) at (1, 0) {};
      \node [style=smallNode] (3) at (-1, 0) {};
      \node [style=smallNode] (4) at (2, 1.5) {};
      \node [style=smallNode] (5) at (0, -1.5) {};
      \node [style=smallNode] (6) at (-2, 1.5) {};
      \node [style=gateNode] (a11) at (-1.5, 3) {};
      \node [style=gateNode] (a12) at (0, 3) {};
      \node [style=gateNode] (a13) at (1.5, 3) {};
      \node [style=gateNode] (a14) at (3, 3) {};
      \node [style=gateNode] (b11) at (-1.5, -3) {};
      \node [style=gateNode] (b12) at (0, -3) {};
      \node [style=gateNode] (b13) at (1.5, -3) {};
      \node [style=gateNode] (b14) at (3, -3) {};
      \node [style=gateNode] (a21) at (4, 1.5) {};
      \node [style=gateNode] (a22) at (4, 0.75) {};
      \node [style=gateNode] (a23) at (4, -1) {};
      \node [style=gateNode] (b21) at (-4, 1.5) {};
      \node [style=gateNode] (b22) at (-4, 0.75) {};
      \node [style=gateNode] (b23) at (-4, -1) {};
      \draw (6) to (3);
      \draw (3) to (5);
      \draw (5) to (2);
      \draw (2) to (4);
      \draw (4) to (1);
      \draw (1) to (6);
      \draw (3) to (1);
      \draw (1) to (2);
      \draw (2) to (3);
      \draw (a11) to (6);
      \draw (b11) to (5);
      \draw (1) to (a12);
      \draw (b12) to (5);
      \draw (4) to (a13);
      \draw (b13) to (5);
      \draw [bend right=15, looseness=0.75] (4) to (a14);
      \draw (4) to (a21);
      \draw (2) to (a22);
      \draw (b21) to (6);
      \draw (b22) to (6);
      \draw [bend left] (b14) to (a23);
      \draw [bend left=15, looseness=0.75] (b23) to (3);
    \end{tikzpicture}
  \end{subfigure}
  \caption{On the left, the graph $K_6$ embedded on the torus (of Euler genus~2)
    with the canonical system of loops $L = (\set{r}, \set{\ell_1,\ell_2})$. On the right, a polygonal embedding $\Pi$ of $K_6$ of signature
    $\sigma = a_1a_2\bar{a_1}\bar{a_2}$. By construction, $K_6$ is a
    minor of the graph $\sw(\Pi) = K_6 \uplus L$.
  }
  \label{fig:schema}
\end{myfigure}

Given a polygonal embedding $\Pi$, the \emph{sewing} of $\Pi$, denoted by $\sw(\Pi)$, is the graph obtained by reattaching the sides of $\Pi$ according to its border and its signature. It is clear that, if $\Pi$ comes from $G\uplus L$ embedded on $\mS$ by cutting along the edges of $L$, then $\sw(\Pi)$ is isomorphic to $G\uplus L$. In particular, $\sw(\Pi)$ embeds on $\mS$ and contains $G$ as minor. In fact, the vertices of the cut-graph can always be chosen to not belong to $G$, and thus $\sw(\Pi)\setminus V(L)$ contains $G$ as minor too.

Observe that, given a polygonal embedding $\Pi$, the simple graph obtained by adding an edge between each pair of twin vertices, removing the corner vertices admits $\sw(\Pi)\setminus V(L)$ as minor. 
Indeed, contracting the edges between the twins in this simple graph is equivalent in merging the twins in $\Pi$. See \cref{fig:simple_graph}.

\begin{myfigure}
  \begin{subfigure}[b]{0.5\textwidth}
      \begin{tikzpicture}[scale=0.6]
        \begin{pgfonlayer}{foreground}
          \node [style=smallNode] (1) at (0, 1.5) {};
          \node [style=smallNode] (2) at (1, 0) {};
          \node [style=smallNode] (3) at (-1, 0) {};
          \node [style=smallNode] (4) at (2, 1.5) {};
          \node [style=smallNode] (5) at (0, -1.5) {};
          \node [style=smallNode] (6) at (-2, 1.5) {};
          \node [style=gateNode] (a11) at (-1.5, 3) {};
          \node [style=gateNode] (a12) at (0, 3) {};
          \node [style=gateNode] (a13) at (1.5, 3) {};
          \node [style=gateNode] (a14) at (3, 3) {};
          \node [style=gateNode] (b11) at (-1.5, -3) {};
          \node [style=gateNode] (b12) at (0, -3) {};
          \node [style=gateNode] (b13) at (1.5, -3) {};
          \node [style=gateNode] (b14) at (3, -3) {};
          \node [style=gateNode] (a21) at (4, 1.5) {};
          \node [style=gateNode] (a22) at (4, 0.75) {};
          \node [style=gateNode] (a23) at (4, -1) {};
          \node [style=gateNode] (b21) at (-4, 1.5) {};
          \node [style=gateNode] (b22) at (-4, 0.75) {};
          \node [style=gateNode] (b23) at (-4, -1) {};
          \node (d3) at (-4, -4) {};
          \node (dd3) at (4, -4) {};
          \node (d2) at (-4, -4.5) {};
          \node (dd2) at (4, -4.5) {};
          \node (d1) at (-4, -5) {};
          \node (dd1) at (4, -5) {};
        \end{pgfonlayer}
        \draw [color=red, bend right=90] (b23) to (d3) to[bend right=0] (dd3) to (a23) ;
        \draw [color=red, bend right=90] (b22) to (d2) to[bend right=0] (dd2) to (a22);
        \draw [color=red, bend right=90] (b21) to (d1) to[bend right=0] (dd1) to (a21) ;
        \draw [color=blue, bend left=160] (a11) to (b11);
        \draw [color=blue, bend left=160] (a12) to (b12);
        \draw [color=blue, bend left=160] (a13) to (b13);
        \draw [color=blue, bend left=160] (a14) to (b14);
        \draw (a11) to (a14);
        \draw (b11) to (b14);
        \draw (a21) to (a23);
        \draw (b21) to (b23);
        \draw (6) to (3);
        \draw (3) to (5);
        \draw (5) to (2);
        \draw (2) to (4);
        \draw (4) to (1);
        \draw (1) to (6);
        \draw (3) to (1);
        \draw (1) to (2);
        \draw (2) to (3);
        \draw (a11) to (6);
        \draw (b11) to (5);
        \draw (1) to (a12);
        \draw (b12) to (5);
        \draw (4) to (a13);
        \draw (b13) to (5);
        \draw [bend right=15, looseness=0.75] (4) to (a14);
        \draw (4) to (a21);
        \draw (2) to (a22);
        \draw (b21) to (6);
        \draw (b22) to (6);
        \draw [bend left] (b14) to (a23);
        \draw [bend left=15, looseness=0.75] (b23) to (3);
      \end{tikzpicture}
  \end{subfigure}
  \hfill
  \begin{subfigure}[b]{0.45\textwidth}
      \begin{tikzpicture}[scale=0.5]
        \begin{pgfonlayer}{foreground}
        \node (19) at (-7, 0) {};
        \node (20) at (7, 0) {};
        \node (21) at (-2.25, 0.25) {};
        \node (22) at (2.25, 0.25) {};
        \node (23) at (2, 0) {};
        \node (24) at (-2, 0) {};
        \node (30) at (1.5, -0.25) {};
        \node (31) at (4, -3.5) {};
        \node (32) at (-3, 0) {};
        \node (33) at (3, 0) {};
        %
        \node [style=gateNode] (a21) at (2.1, -1.55) {};
        \node [style=gateNode] (a22) at (2.65, -2.3) {};
        \node [style=gateNode] (a23) at (3.25, -2.9) {};
        \node [style=gateNode] (a11) at (-2.9, -0.2) {};
        \node [style=gateNode] (a12) at (-2, -0.8) {};
        \node [style=gateNode] (a13) at (-0.55, -1.2) {};
        \node [style=gateNode] (a14) at (-0.1, -1.2) {};
        \node [style=smallNode] (1) at (-2.25, -1.5) {};
        \node [style=smallNode] (2) at (-1.75, -2.75) {};
        \node [style=smallNode] (3) at (-3.75, -2) {};
        \node [style=smallNode] (4) at (-0.5, -2) {};
        \node [style=smallNode] (5) at (-3.75, -1) {};
        \node [style=smallNode] (6) at (-3, -3) {};
        \node (55) at (-2, 0) {};
        \node (56) at (-1.5, -0.25) {};
        \node (57) at (-0.75, -0.5) {};
        \node (58) at (-4, -3.5) {};
        \node (59) at (-3, -3.75) {};
        \node (60) at (-1.75, -4) {};
        \node (61) at (0.5, -0.5) {};
        \node (62) at (1.25, -4) {};
        \node (64) at (4, 0) {};
        \node (65) at (5, 0) {};
        \node (66) at (6, 0) {};
        \node (67) at (-4, 0) {};
        \node (68) at (-5, 0) {};
        \node (69) at (-6, 0) {};
        \end{pgfonlayer}
        \draw [style=psurface, bend right=90] (19.center) to (20.center) to (19.center);
        \draw [style=phole] (24.center) to[in=225, out=-45, looseness=0.75] (23.center) to[bend right=90, looseness=0.50] (24.center);
        \draw [style=psurfaceLine, bend right=60, looseness=0.75] (21.center) to (22.center);
        \draw [ bend right=15, looseness=0.75] (a21.center) to (a23.center);
        \draw [in=-180, out=-50, looseness=0.75] (a11.center) to (a14.center);

        \draw (3) to (1);
        \draw (1) to (2);
        \draw (2) to (3);
        \draw (3) to (6);
        \draw (6) to (2);
        \draw (2) to (4);
        \draw (4) to (1);
        \draw (1) to (5);
        \draw (5) to (3);
        \draw [in=90, out=0, looseness=0.50] (57.center) to (4);
        \draw [in=75, out=150, looseness=0.50] (56.center) to (1);
        \draw [in=45, out=135, looseness=0.75] (55.center) to (5);
        \draw [in=-180, out=-45, looseness=0.75] (6) to (60.center);
        \draw [in=-180, out=-90, looseness=0.75] (6) to (59.center);
        \draw [in=-15, out=-150, looseness=0.75] (6) to (58.center);
        \draw [dashed, in=-15, out=-180, looseness=0.50] (57.center) to (60.center);
        \draw [dashed, in=-60, out=-15, looseness=0.25] (56.center) to (59.center);
        \draw [dashed, in=150, out=-75, looseness=0.50] (55.center) to (58.center);
        \draw [in=165, out=60, looseness=0.75] (4) to (61.center);
        \draw [dashed, in=-135, out=-15, looseness=0.50] (61.center) to (62.center);
        \draw [in=-90, out=0, looseness=0.75] (4) to (64.center);
        \draw [bend left=270] (64.center) to (67.center);
        \draw [bend right=15] (67.center) to (5);
        \draw [in=-90, out=-15, looseness=0.75] (2) to (65.center);
        \draw [bend left=270] (65.center) to (68.center);
        \draw [in=165, out=-90] (68.center) to (5);
        \draw [bend left=270] (66.center) to (69.center);
        \draw [in=165, out=-90] (69.center) to (3);
        \draw [in=30, out=-90, looseness=0.75] (66.center) to (62.center);
      \end{tikzpicture}
  \end{subfigure}
  \caption{The graph on the left is constructed from the polygonal embedding $\Pi$ of $K_6$ from \cref{fig:schema}, by removing the four corner vertices and making the twin vertices adjacent. Contracting the colored edges results in the graph $\sw(\Pi)\setminus \{r\}$ embedded on the torus (on the right), which contains $K_6$ as minor.}
  \label{fig:simple_graph}
\end{myfigure}

In order to formalize minor containment preserving signature for polygonal embeddings, we introduce the following relation, called \emph{p-minor} (for polygonal-embedding-minor). More precisely, a polygonal embedding $\Pi$ is a \emph{p-minor} of a polygonal embedding $\Pi'$ if it has the same signature, same corners, and if $\sw(\Pi)$ is a minor of $\sw(\Pi')$.
Moreover, we say that $\Pi$ has \emph{size} $(m,n)$ if it has at most $n$ internal vertices (those that are not lying on the boundary of the outerface of $\Pi$), and each side has at most $m$ vertices, the corners excluded.
In particular, every polygonal embedding $\Pi$ of size $(m,n)$ has at most $n + |\sigma(\Pi)|(m+1)$ vertices\footnote{$|\sigma(\Pi)|$ denotes the total number of symbols in the word $\sigma(\Pi)$.}, and $\sw(\Pi)$ has at most $n + |\sigma(\Pi)|(m+1)/2$ vertices (since vertices on the boundary of the outerface are duplicated). 
Note that a polygonal embedding of size of $(m,0)$ is a Hamiltonian outerplane\footnote{A graph is Hamiltonian if there is a simple cycle spanning all its vertices. An outerplane graph is a plane graph such that all the vertices are on the boundary of the outerface.} graph.


\section{Main results}\label{sec:main}

Recall that our main results (\cref{th:main,th:main2}) state that, given a positive integer $n$ and a surface $\mS$ with signature $\sigma$, one can construct a graph $\sU_{n,\mS}$ embedded on $\mS$ that is minor-universal for all graphs on at most $n$ vertices embeddable on $\mS$. Moreover, the number of vertices of $\sU_{n,\mS}$ is bounded by a polynomial function of $n$ and $|\sigma|$.
If $\sigma$ is given as an input (the surface is then determined by $\sigma$), then \cref{th:main2} yields the bound $O(|\sigma|^4(n+|\sigma|)^2)$. If $\sigma$ is not given, then we choose $\sigma$ to be the canonical signature of $\mS$ ($\sigma$ is then determined by the Euler genus $g$ and the orientability of $\mS$), and \cref{th:main} yields the bound $O(g^2(n+g)^2)$.

In both cases, our minor-universal graphs are defined via the sewing of specific polygonal embeddings, denoted by $\PP$, where $m$ is a positive integer.
The signature of $\PP$ is precisely $\sigma$ and its size is  related to $m$. The polygonal embedding $\PP$ is obtained from the half-$(|\sigma|m\times |\sigma|m)$-grid cut by its diagonal, plus $|\sigma|$ vertices for the corners. 
The non-corner vertices on the boundary of the outerface are those of the grid diagonal, in the natural order, and each side contains exactly $m$ non-corner vertices. See \cref{fig:PU} for an example, and \cref{subsec:proofLemFinal} for a more formal description of $\PP$.
Observe that the polygonal embedding $\PP$ is defined for every signature~$\sigma$ (canonical or not) and for every positive integer~$m$.

\begin{myfigure}
  \begin{tikzpicture}[scale=0.5,rotate=-45]
    \def\k{4}
    \def\m{3}
    \pgfmathsetmacro{\n}{\k*\m-1}
    \pgfmathsetmacro{\km}{\k-1}
    \begin{pgfonlayer}{foreground}
      \foreach \i in {0,...,\n}{ %
        \draw (\i,0) -- (\i,\i) -- (\n,\i);
        \foreach \j in {0,...,\i} \node[tinyNode] at (\i,\j) {};
        \node[tinygateNode] at (\i,\i) {};
      }
    \end{pgfonlayer}
      \foreach \i in {0,...,\km}{
        \pgfmathparse{\m*\i-0.5}
        \node[polyVertex] (u\i) at (\pgfmathresult,\pgfmathresult) {};
      }
      \draw[polyEdge, draw=blue] (u0) -- (u1) node[midway,left, above]{ \textcolor{blue}{$a_1$}};
      \draw[polyEdge] (u1) -- (u2) node[midway,left, above]{ \textcolor{red}{$a_2$}};
      \draw[polyEdge, draw=blue] (u3) -- (u2) node[midway,left, above]{ \textcolor{blue}{$\bar{a_1}$}};
      \pgfmathsetmacro{\a}{\m*\k -0.5}
      \draw[polyEdge] (u0) to[out=-135,in=-135,-] (\a +1,-1.5) to[out=45,in=45,-] (\a,\a) to node[midway,left, above]{ \textcolor{red}{$\bar{a_2}$}} (u3); 
      \draw[<->] (-0.75,1) -- (\m-1.75,\m) node[midway,left, above]{ $m=\m$};
      \draw[<->] (0,-1) -- (\n,-1) node[midway,left=5pt, below]{ $|\sigma|m$};
  \end{tikzpicture}
  \caption{The polygonal embedding $\PP$ with signature $\sigma = a_1a_2\bar{a_1}\bar{a_2}$ (the torus) and $m=3$. The four white vertices are the corners of $\PP$.}
  \label{fig:PU}
\end{myfigure}

We first make the following observation about the size of $\PP$ and the number of vertices of $\sw(\PP)$.

\begin{proposition}\label{prop:PU}
  The embedding $\PP$ is a polygonal embedding of signature $\sigma$ with size $(m,\frac{1}{2}|\sigma|m\cdot(|\sigma|m-1))$. In particular, $\sw(\PP)$ has no more than $\frac{1}{2}(|\sigma|^2m^2 + |\sigma|)$ vertices, and at most $\frac{1}{2}|\sigma|^2m^2 + 1$ vertices if $\sigma$ is canonical.
\end{proposition}

\begin{proof}
The number of internal vertices of $\PP$ is $n = 1 + 2 + \cdots + |\sigma|m-1=\frac{1}{2}|\sigma|m\cdot(|\sigma|m-1))$. 
By merging the twin sides of $\PP$ to obtain $\sw(\PP)$, we destroy at least half of the vertices of the cycle outerface of $\PP$. It follows that $\sw(\PP)$ has no more than $n + |\sigma|(m+1)/2$ vertices. 
However, depending on the signature $\sigma$, this number can be lower as the set of corners may collapse even more when merging the sides. If the signature is canonical, i.e., comes from a canonical system of loops, $\sw(\PP)$ has only $n + |\sigma|m/2 + 1$ vertices since such a signature tell us that all the corners are twins, resulting into the single vertex of the system of loops. 
\end{proof}

Our technical theorem is the following. Informally, this is the analog of our minor-universality theorems in the setting of polygonal embeddings. Note that it applies to any polygonal embedding, not only those whose signature comes from a canonical system of loops, which is a key point for proving \cref{th:main2}. 

\begin{theorem}\label{th:tec}
  Every polygonal embedding $\Pi$ of size $(m,n)$ is a p-minor of $\PU_{\sigma(\Pi),m+2n}$. 
\end{theorem}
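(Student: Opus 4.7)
The plan is to prove Theorem~\ref{th:tec} by induction on $n$, the number of internal vertices of $\Pi$, with the size parameter $m$ being "charged" $+2$ at each inductive step.

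For the base case $n=0$, the polygonal embedding $\Pi$ is a Hamiltonian outerplane graph with at most $m$ vertices per side whose border realizes the signature $\sigma(\Pi)$. I would argue that any such $\Pi$ is a p-minor of $\PU_{\sigma(\Pi),m}$ by an explicit construction: the diagonal of the half-grid $\PU_{\sigma(\Pi),m}$ carries exactly $m$ positions per side, so the vertices of $\Pi$ map bijectively onto the corresponding corners and internal side-positions of $\PU_{\sigma(\Pi),m}$, and each edge of $\Pi$ can be realized as a vertex-disjoint path that enters the interior of the half-grid. Because $\Pi$ is outerplanar and the edges have a cyclic nesting structure along the boundary, these paths can be routed monotonically through the rows and columns of the half-grid, analogously to the standard routing in the RST94 grid embedding. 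This base case is essentially the planar grid-minor statement adapted to the triangular half-grid with a prescribed boundary.

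For the inductive step, assuming the result for polygonal embeddings with fewer than $n$ internal vertices, I would prove a \emph{reduction lemma}: any polygonal embedding $\Pi$ of size $(m,n)$ with $n\ge 1$ is a p-minor of some polygonal embedding $\Pi'$ of size $(m+2, n-1)$ with the same signature. Given such a lemma, the inductive hypothesis gives that $\Pi'$ is a p-minor of $\PU_{\sigma(\Pi'),(m+2)+2(n-1)}=\PU_{\sigma(\Pi),m+2n}$, and the p-minor relation (being transitive, since it is defined via equality of signatures and minor containment of sewings) propagates back to $\Pi$. The reduction itself proceeds by choosing an internal vertex $v$ and performing a local topological operation that relocates $v$ to the boundary: attach $v$ via a short path to a side of the polygon, so that in the sewing $v$ becomes incident to a pair of twin boundary vertices (one on each of two twin sides of $\Pi'$, added to preserve the signature). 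The branch set of $v$ in $\sw(\Pi')$ then contracts to $v$ in $\sw(\Pi)$, yielding the p-minor relation. Since this adds at most one vertex to each of a pair of twin sides, the maximum side length grows by at most two.

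The main obstacle, in my view, is the reduction lemma, and in particular verifying that one can always pick a "direction" in which to extrude the chosen internal vertex $v$ while \emph{exactly} preserving $\sigma(\Pi)$ — this requires picking twin sides that are compatible with the surface structure near $v$, and handling vertices that already lie close to the boundary or have high degree. The base case, though conceptually standard, also requires a careful explicit routing to handle arbitrary outerplane graphs with a prescribed signature on the border, rather than just the canonical planar case of the $2n\times 2n$ grid.
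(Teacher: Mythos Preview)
Your overall architecture matches the paper's: the base case $n=0$ is exactly Lemma~\ref{lem:final} (the outerplanar routing into the half-grid, adapted from~\cite{RST94}), and iterating your reduction lemma $n$ times would reproduce Lemma~\ref{lem:outer}, which takes size $(m,n)$ to $(m+2n,0)$ in one shot. The paper simply states that Theorem~\ref{th:tec} is the concatenation of these two lemmas. So you are not proposing a genuinely different route; you are proposing to prove Lemma~\ref{lem:outer} one internal vertex at a time rather than all at once via a spanning forest.

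The gap is that your sketch of the reduction step does not do what you claim. As written---``attach $v$ via a short path to a side of the polygon, so that in the sewing $v$ becomes incident to a pair of twin boundary vertices''---the vertex $v$ stays internal; you have only added a pendant and two subdivision points on twin sides, so $n$ does not drop. If instead you mean to insert $v$ itself into the boundary cycle, you must say between which two boundary vertices and why planarity survives: an internal vertex can be separated from every boundary edge by chords (take a single internal vertex joined to three mutually non-adjacent boundary vertices, with the three chords between them present). Your bookkeeping is also inconsistent: adding one vertex to each of two twin sides raises the per-side maximum by~$1$, which would give size $(m+1,n-1)$ and hence, after iteration, the stronger bound $\PU_{\sigma,m+n}$---a red flag. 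The paper's actual mechanism for pushing internal vertices onto the boundary is not a local attachment: it blows up a spanning forest into cycles via an Euler tour (doubling vertices, hence the~$2$ in $m+2n$), splits each root into a boundary edge with a matching subdivision on the twin side, and then \emph{swaps} that boundary edge for the whole cycle while mirroring it on the twin side. A single-vertex version of this machinery can be made to work, but it is this split-and-swap pair, together with the careful twin-side maintenance of Claim~\ref{claim:splitAnchor}, that you are missing.
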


\begin{proofof}{\textbf{\cref{th:main,th:main2}.}}
Let us now show how to prove \cref{th:main,th:main2} using \cref{th:tec}.
Consider a surface $\mS$ of Euler genus $g$ and the signature $\sigma$ of any polygonal schema of $\mS$. Recall that this signature corresponds to some cut-graph $L$ of $\mS$ that embeds on $\mS$.
In \cref{th:main2}, the signature $\sigma$ is arbitrary, whereas in \cref{th:main}, $\sigma$ is chosen as the canonical signature.
In the case of a canonical signature, the signature and its cut-graph (a canonical system of loops) are completely determined by the Euler genus and orientability of $\mS$ (see \cref{sec:prelim}). In particular, $|\sigma| = 2g$.

Now consider any $n$-vertex graph $G$ embedded on $\mS$.
Then, for every embedding of $L$ on $\mS$ with $V(L)$ disjoint from $V(G)$, $G$ has a polygonal embedding $\Pi_G$ of signature $\sigma$ such that $G$ is a minor of $\sw(\Pi_G)\setminus V(L)=(G\uplus L)\setminus V(L)$, a graph that embeds on $\mS$ by construction.
Note that $\Pi_G$ has at most $n$ internal vertices due to $G$, and a certain number of vertices lying on the sides of $\Pi_G$ due to $V(L)$ and to the intersections between $G$ and $L$.
To apply \cref{th:tec}, we need to determine the size of $\Pi_G$, and thus bound the maximum number of intersections between an edge of $L$ and the edges of $G$, which we denote by $m$.
Indeed, to bound the size of $\Pi_G$, we require the number of intersections per edge of $L$ (recall that each edge of $L$ corresponds to a side of $\Pi_G$), as the total number of intersections between $G$ and $L$ is not sufficiently precise for our purpose.
We can bound this number thanks to the following results\footnote{In \cref{lem:crossing_cutgraph}, the proof of \cite{Negami01} has an issue for the non-oriented case that has been corrected by~\cite{FHdM24}. In \cref{lem:crossing_systemloops}, we refer to~\cite{CdV21}[Theorem~8.1] for a reformulation of the original statement of~\cite{LPVV01}.}:

\begin{lemma}[\cite{Negami01,FHdM24}]\label{lem:crossing_cutgraph}
  Given an embedding of $G$ on a surface $\mS$ of Euler genus $g$, for every cut-graph $L$ of $\mS$, there is an embedding of $L$ on $\mS$ such that each edge of $L$ intersects any edge of $G$ in at most $O(g)$ points.
\end{lemma}

In the case where we choose the cut-graph to be a canonical system of loops, we can get more precise bounds, as well as a polynomial time algorithm finding such system of loops:

\begin{lemma}[\cite{LPVV01,FHdM24}]\label{lem:crossing_systemloops}
  Given an embedding of $G$ on a surface $\mS$ of Euler genus $g$, there is a polynomial time algorithm that computes a canonical system of $g$ loops such that each loop intersects any edge of $G$ in at most $4$ points if $\mS$ is orientable~\cite{LPVV01}, and at most $30$ points if $\mS$ is non-orientable~\cite{FHdM24}.
\end{lemma}

Note that for both lemmas, we can always make sure that the embedding of $L$ satisfies that $V(L)$ and $V(G)$ are disjoint, perhaps by slightly shifting the embedding of $L$.
By choosing the embedding of $L$ using the previous lemmas, together with the fact that $G$ has at most $3n + 3g - 6$ edges\footnote{From Euler's Formula in simple connected graphs with Euler genus $g$, $n$ vertices, $e$ edges and $f$ faces, that is $n - e + f = \chi(\mS) = 2 - g$, and from the fact that $3f \ge 2e$.}, and the fact that $|\sigma|\ge 2g$, we get that $m = O(g(n+g)) = O(|\sigma|(n+|\sigma|))$ for any $\sigma$ (using \cref{lem:crossing_cutgraph}), and $m = O(n+g)$ if we use a canonical signature (using \cref{lem:crossing_systemloops}).
Note that from the definition of $m$, $\Pi_G$ has size $(m,n)$.  

By \cref{th:tec}, $\Pi_G$ is a p-minor of $\PU_{\sigma,m+2n}$, which by the definition of p-minor implies that $\sw(\Pi_G)$ is a minor\footnote{In fact a surface minor.} of $\sw(\PU_{\sigma,m+2n})$. 
It follows that $G$ is a minor of $\sw(\PU_{\sigma,m+2n})\setminus V(L)$. Consequently, we can define our minor-universal graph (which is independent of $G$) as $\sU_{n,\mS} = \sw(\PU_{\sigma,m+2n})\setminus V(L)$. \cref{fig:PU2} shows several examples of "simple" graphs that contain $\sU_{n,\mS}$ as minor and with a proportional number of vertices.

It remains to show that $\sU_{n,\mS}$ is indeed our minor-universal graph of desired size. 
By construction, it embeds on $\mS$ and contains every $n$-vertex graph that embeds in $\mS$ as a minor.
Finally, $|V(\sU_{n,\mS})| < |V(\sw(\PU_{\sigma,m+2n}))| = O({|\sigma|}^2 m^2)$ by \cref{prop:PU}. Depending on the lemma we used to bound $m$, this gives $O(|\sigma|^4(n+|\sigma|)^2)$ in the general case (\cref{th:main2}) and $O(g^2(n+g)^2)$ for the canonical case (\cref{th:main}).

This completes the proofs of \cref{th:main,th:main2}.
\end{proofof}

\begin{myfigure}
  \begin{subfigure}[b]{0.5\textwidth}
  \centering
    \begin{tikzpicture}[scale=0.45,rotate=-45]
      \def\k{4}
      \def\m{3}
      \pgfmathsetmacro{\n}{\k*\m-1}
      \pgfmathsetmacro{\mminus}{\m-1}
      \pgfmathsetmacro{\kminus}{\k-1}
    \begin{pgfonlayer}{foreground}
      \foreach \i in {0,...,\n}{ %
        \draw (\i,0) -- (\i,\i) -- (\n,\i);
        \foreach \j in {0,...,\i} \node[tinyNode] at (\i,\j) {};
        \node[tinygateNode] at (\i,\i) {};
      }
    \end{pgfonlayer}
        \foreach \i in {0,...,\mminus}{
          \pgfmathparse{\m*3-\i-1}
          \node (a\i) at (\pgfmathresult,\pgfmathresult) {};
          \node (b\i) at (\pgfmathresult+\m,\pgfmathresult+\m) {};
          \draw[polyEdge, draw=blue, out=135, in=135, -] (\i,\i) to (a\i.center);
          \draw[polyEdge, out=135, in=135, -] (\i+\m,\i+\m) to (b\i.center);
        }
        \foreach \i in {0,...,\kminus}{
          \pgfmathparse{(\i+1)*\m-1}
          \node (t\i) at (\pgfmathresult,\pgfmathresult) {};
          \draw (\i*\m,\i*\m) to (t\i.center);
        }
        \draw[<->] (0,-1) -- (\n,-1) node[midway,left=5pt, below]{ $|\sigma|m$ };
    \end{tikzpicture}
  \end{subfigure}
  \hfill
  \begin{subfigure}[b]{0.45\textwidth}
  \centering
    \begin{tikzpicture}[scale =0.2,rotate=90]
      \def\k{4}
      \def\m{3}
      \def\inner{5}
      \pgfmathsetmacro{\n}{\k*\m}
      \pgfmathsetmacro{\outer}{\n-1+\inner}
      \pgfmathsetmacro{\mminus}{\m-1}
      \pgfmathsetmacro{\kminus}{\k-1}
      \pgfmathsetmacro{\phi}{360/(\n)}
        \foreach \r in {\inner,...,\outer} \draw (0,0) circle (\r);
        \begin{pgfonlayer}{foreground}
            \foreach \i in {0,...,\n}{
               \draw (\phi*\i:\inner) -- (\phi*\i:\outer);
               \foreach \j in {\inner,...,\outer} \node[tinyNode] at (\phi*\i:\j) {};
               \node[tinygateNode] at (\phi*\i:\inner) {};
            }
        \end{pgfonlayer}
        \foreach \i in {0,...,\mminus}{
          \draw[polyEdge, draw=blue,-] (\phi*\i:\inner) -- (3*\m*\phi-\phi*\i-\phi:\inner);
          \draw[polyEdge,-] (\phi*\i+\m*\phi:\inner) -- (4*\m*\phi-\phi*\i-\phi:\inner);
        }
    \end{tikzpicture}  
  \end{subfigure}
  \caption{Two graphs containing $\sw(\PP)\setminus \set{r}$ as minor, where $\PP$ is the polygonal embedding from \cref{fig:PU} with $m=3$ and $\sigma = a_1a_2\bar{a_1}\bar{a_2}$. Both are minor-universal for all (small enough) graphs embeddable on surfaces having signature $\sigma$ (here the torus). The left one has about half as many vertices.}
  \label{fig:PU2}
\end{myfigure}

\begin{proofof}{\textbf{\cref{th:tec}}.}
We proceed in two steps, which are summarized and formalized by the next two technical lemmas, whose proofs are sketched hereafter and fully presented in \cref{sec:proofs}.

The first step (in \cref{lem:outer}) is to transform any polygonal embedding into a p-major polygonal embedding that is Hamiltonian and outerplane.
More precisely,

\begin{restatable}{lemma}{lemouter}\label{lem:outer}
  Every polygonal embedding $\Pi$ of size $(m,n)$ is a p-minor of a polygonal embedding of size $(m+2n,0)$.
\end{restatable}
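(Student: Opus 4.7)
My plan is to prove the lemma by induction on $n$. The base case $n=0$ is immediate, since $\Pi$ already has size $(m,0)$ and is trivially a p-minor of itself. For the inductive step, given $\Pi$ of size $(m,n)$ with $n\ge 1$, I aim to build an intermediate polygonal embedding $\hat\Pi$ of size $(m+2,n-1)$ such that $\Pi$ is a p-minor of $\hat\Pi$; applying the induction hypothesis to $\hat\Pi$ and composing via the transitivity of the p-minor relation then yields a p-minor of size $(m+2n,0)$, as required.

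To construct $\hat\Pi$, I first select an internal vertex $v$ of $\Pi$ that is adjacent by an edge to a boundary vertex $u$. Such a pair always exists: $\Pi$ may be assumed connected (it arises by cutting the connected graph $G\uplus L$ along $L$), it has at least one internal vertex, and walking toward the boundary from any internal vertex yields an internal vertex whose last edge on the walk reaches the boundary. The vertex $u$ lies on some side $s$ and is not a corner, since corners have degree $2$ and thus no interior edge. I then perform a local \emph{slit} of $\Pi$ along the edge $uv$: the vertex $u$ is duplicated into two consecutive copies $u_L,u_R$ placed on $s$ flanking the new boundary position of $v$, so that $s$ locally reads $\ldots,u',u_L,v,u_R,u'',\ldots$; the interior edge $uv$ splits into two new boundary edges $u_Lv$ and $vu_R$; and the remaining edges of $u$ in $\Pi$ are distributed between $u_L$ and $u_R$ according to the planar cyclic order at $u$ in $\Pi$. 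Matching degree-$2$ dummy vertices are inserted on the twin side of $s$ at positions corresponding to $u_L,v,u_R$, so that the signature and twin vertex counts are preserved. The intended minor witness then sends $u$ to the blob $\{u_L,u_R\}$, $v$ to $\{v\}$, and every other vertex $w$ to $\{w\}$, under which every edge of $\sw(\Pi)$ is seen directly in $\sw(\hat\Pi)$.

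The hard part will be ensuring that $\{u_L,u_R\}$ is a connected subgraph of $\sw(\hat\Pi)$, which is required for it to qualify as a blob. The natural candidate is an interior chord $u_Lu_R$ of $\hat\Pi$, but after the slit the vertices $u_L$ and $u_R$ lie on distinct interior faces (inherited from the two faces of $\Pi$ that bordered $uv$) and share only the outer face of $\hat\Pi$, so no such chord is drawable without introducing a crossing. Overcoming this obstacle is the technical heart of the proof, and I expect the resolution to combine (i) a minor-preserving preprocessing of $\Pi$ (for instance, a triangulation, which only adds edges) to force the faces incident to $uv$ to be triangles, with (ii) a careful choice of $v$ — such as taking the outermost internal neighbour of $u$ in the cyclic order around $u$ — so that a small triangular region opens at the base of the slit inside which the chord $u_Lu_R$ can be drawn. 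Iterating the inductive step $n$ times, each round adds at most $2$ vertices to one side of $\Pi$ and $2$ dummies to its twin while reducing the number of internal vertices by $1$, producing the desired polygonal embedding of size $(m+2n,0)$.
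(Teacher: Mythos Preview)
You have correctly located the crux of the argument --- reconnecting the two copies $u_L,u_R$ of the split boundary vertex --- but the resolution you sketch does not go through. After triangulating and slitting along $uv$, the boundary locally reads $\ldots,u_L,v,u_R,\ldots$, and the two interior faces that bordered $uv$ in $\Pi$ become, respectively, the unique interior face incident to the new boundary edge $u_Lv$ and the unique interior face incident to $vu_R$. These two faces are distinct whenever $uv$ is not a bridge, which triangulation guarantees. More generally, the interior faces around $u$ in $\Pi$ are cyclically ordered, and the slit simply partitions this cyclic list between $u_L$ and $u_R$; hence $u_L$ and $u_R$ share \emph{no} interior face of $\hat\Pi$, regardless of which internal neighbour $v$ you select or whether you pre-triangulate. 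There is therefore no ``small triangular region at the base of the slit'' in which a chord $u_Lu_R$ could be drawn, and the blob $\{u_L,u_R\}$ cannot be made connected on the side you are working on.

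The paper circumvents this obstruction by placing the reconnecting edges on the \emph{twin} side rather than on the side being slit. It does not induct vertex by vertex: instead it blows up an entire spanning forest of the internal vertices (rooted on the outerface) along a planar Euler tour, producing for each tree $T_i$ a cycle $C_i$ that bounds a disk in which the reconnecting edges $E_i$ are provisionally drawn. After splitting each anchor $a_i$ into an edge $a'_ia''_i$ (together with a matching subdivision on the twin side so that twin pairings remain valid), a \emph{swapping} operation deletes $a'_ia''_i$ and the edges $F_i$, pushes the path $P_i$ onto the outerface, and redraws a copy $P'_i\cup F'_i$ attached along the twin side; the sewing then identifies each vertex of $P_i$ with its copy in $P'_i$, so the reconnecting edges land exactly where needed. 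Your one-vertex scheme could in principle be salvaged by the same trick --- insert the two dummies on the twin side so that $\tilde u$ sits at an end rather than in the middle of the three new twin vertices, and draw the chord there --- but you would then have to show that the twin-side chords produced over all $n$ iterations nest coherently, which is precisely what the global Euler-tour construction delivers for free.
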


The intuition is that all internal vertices of $\Pi$ can, in a sense, be pushed to the boundary of the outer face without significantly increasing the number of vertices on each side. It follows that we can consider that all the vertices belong to the outerface of the polygonal embedding, making it outerplanar.
This is done by first cutting along the edges of some spanning forest of the inner graph (rooted in a vertices of the outerface), creating new faces attached to the outerface.
We make it outerplanar by splitting each root of the forest, thereby opening each newly created face to the outerface.
During this process, each non-leaf vertex may appear multiple times (the number of newly created vertices is bounded by the number of edges in the forest, and hence by $n$).
To keep track of this vertex duplication, we add some edges that could be contracted later to get back to the original polygonal embedding. 
Other important transformations are then needed to embed those new edges (with special care of the orientation of each twin side)
and to obtain an outerplanar polygonal embedding of the right size and containing $\Pi$ as a p-minor.

The second step (in \cref{lem:final}) is to show that any Hamiltonian outerplane polygonal embedding is a p-minor of a given polygonal embedding $\PP$, which depending only on the size and signature of the input polygonal embedding. More precisely,

\begin{restatable}{lemma}{lemfinal}\label{lem:final}
  Every polygonal embedding $\Pi$ of size $(m,0)$ is a p-minor of $\PU_{\sigma(\Pi),m}$.
\end{restatable}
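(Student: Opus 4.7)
The plan is to realize $\Pi$ as a minor of $\PU_{\sigma,m}$ by matching the boundary vertices of $\Pi$ to positions on the diagonal of $\PU_{\sigma,m}$ and routing each chord of $\Pi$ through the triangular interior as a vertex-disjoint path. Since $\Pi$ has size $(m,0)$, every vertex of $\Pi$ lies on the boundary of the outerface, so $\Pi$ is a Hamiltonian outerplane graph; its edges split into \emph{boundary edges} (between consecutive boundary vertices) and \emph{chords}, and since $\Pi$ is outerplanar the chord family is non-crossing.

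First, I would define the branch sets. Because $\sigma(\Pi) = \sigma(\PU_{\sigma,m})$ and $\Pi$ has at most $m$ non-corner vertices on each side while $\PU_{\sigma,m}$ has exactly $m$, the vertices of $\Pi$ can be injected into the diagonal of $\PU_{\sigma,m}$ preserving side and position. Concretely, for each side $s$ of $\Pi$ with vertices at positions $i_1 < \dots < i_k$, I assign to the vertex at position $i_\ell$ the interval of diagonal vertices of $\PU_{\sigma,m}$ between positions $i_\ell$ and $i_{\ell+1}-1$ on side $s$ (and similarly absorb leading/trailing padding into adjacent corner branch sets). Corners of $\Pi$ map to corresponding corners of $\PU_{\sigma,m}$. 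Because the twin relations on both sides are determined by the common signature $\sigma$, this position-preserving mapping is compatible with twin identification, hence descends to the sewings. Each branch set so far is a connected subpath of the diagonal, and boundary edges of $\Pi$ are automatically realized by diagonal adjacencies between consecutive branch sets.

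Next, I would route the chord edges. Order the chords by nesting depth (innermost first), and for a chord $c$ nested at depth $d$ connecting diagonal positions $p$ and $q$, I would use the standard half-grid ``bracket'' path: one step orthogonally from $p$ into the interior, $d$ further steps inward to reach row $d$, then travel parallel to the diagonal to the column above $q$, and finally drop back to $q$. I would add this path (minus its endpoints) to the chord's two branch sets by assigning half of the interior path to each; connectedness of the enlarged branch sets is clear, and the chord's edge is realized by the middle edge of the path. A short induction on the number of chords shows that paths at strictly different nesting depths use disjoint rows of the triangular grid, while paths at the same depth sit above disjoint arcs of the diagonal and are therefore vertex-disjoint. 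The triangular half-grid has width and depth $|\sigma|m$, which comfortably accommodates the maximum possible nesting depth, at most $\lfloor |\sigma|m/2 \rfloor$.

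The main obstacle I expect is the bookkeeping for this routing: one must simultaneously verify that (i) the bracket paths for non-crossing chords can be made vertex-disjoint at every depth, (ii) the padding of the diagonal into the boundary branch sets respects corners (which, in both $\sw(\Pi)$ and $\sw(\PU_{\sigma,m})$, all collapse to the root vertex $r$ of the system of loops), and (iii) the induced map of branch sets after sewing is well-defined, i.e. twin vertices of $\PU_{\sigma,m}$ belong to branch sets coming from twin vertices of $\Pi$. Point (iii) is where the hypothesis $\sigma(\Pi)=\sigma(\PU_{\sigma,m})$ is crucially used: twin sides carry matching position-by-position correspondences, so contracting each branch set in $\sw(\PU_{\sigma,m})$ produces exactly the vertex set of $\sw(\Pi)$, with the required edges present, establishing the claimed p-minor relation.
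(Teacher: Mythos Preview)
Your overall strategy---identify the boundary cycle of $\Pi$ with the diagonal cycle of $\PU_{\sigma,m}$ and realize the chords through the interior---is correct and is exactly what the paper does. The gap is in the routing itself: your nesting-depth ``bracket'' paths do not fit the geometry of $\PU_{\sigma,m}$.

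Recall that $\PU_{\sigma,m}$ is the lower half of a square grid: vertices are pairs $(i,j)$ with $0\le j\le i<|\sigma|m$ and edges are only horizontal or vertical. There are \emph{no edges parallel to the diagonal}. So ``go $d$ steps inward, travel parallel to the diagonal, drop back'' cannot be done literally; it would require a staircase occupying two adjacent anti-diagonals, and then brackets at consecutive nesting depths collide. Worse, the available depth is non-uniform: from the diagonal vertex $(p,p)$ one can go at most $p$ steps down its column and at most $|\sigma|m-1-p$ steps along its row, so a deeply nested chord with an endpoint near $(0,0)$ has no room for a depth-$d$ detour. Finally, even if you repair the routing, a vertex incident to many chords would receive many half-brackets, and you still owe an argument that their union is connected---which essentially forces the construction below.

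The paper instead assigns witnesses \emph{per vertex}, not per chord. First it takes a p-major of $\Pi$ so that each side has exactly $m$ non-corner vertices (this makes the diagonal correspondence a bijection and trivializes your point~(iii) about twins) and triangulates the inner faces. Then, writing $a(i)\le i\le b(i)$ for the extreme indices among the non-corner neighbours of $v_i$, it sets
\[
W(v_i)\;=\;\{(i,i)\}\;\cup\;\{(i,k):a(i)<k\le i\}\;\cup\;\{(k,i):i\le k<b(i)\},
\]
an L-shape consisting of a column segment and a row segment meeting at $(i,i)$. For a chord $v_i\!-\!v_j$ with $j<i$, the column part of $W(v_i)$ and the row part of $W(v_j)$ meet near the grid point $(i,j)$, while disjointness of $W(v_i)$ and $W(v_j)$ follows directly from outerplanarity: a common point $(i,j)$ would force the chords $v_{a(i)}\!-\!v_i$ and $v_j\!-\!v_{b(j)}$ to cross. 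This uses precisely the triangular shape of $\PU_{\sigma,m}$ and handles all chords at a vertex at once.
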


The proof of \cref{lem:final} is inspired by the embedding of Hamiltonian planar graphs with $n$ vertices into the $(n\times n)$-grid~\cite[Theorem~(1.3)]{RST94}.
From this embedding, one can easily show that outerplanar graphs with $n$ vertices are minors of a half-$(n\times n)$-grid. In our context, the diagonal of this half-grid is the place for the sides of our polygonal embedding, as depicted in \cref{fig:PU}. Interestingly, along the way, we give in \cref{prop:altRST} an alternative proof of a result of~\cite[Theorem~(1.4)]{RST94} that states that every planar graph with $n$ vertices is minor of a Hamiltonian planar graph with $2n$ vertices.

Overall, \cref{th:tec} is a straightforward combination of \cref{lem:outer} and \cref{lem:final}.
\end{proofof}

\paragraph{Estimating the size of the minor-universal graph.}

Putting it all together, we can give an accurate estimate of the number of vertices of $\sU_{n,\mS}$, better than the one provided by \cref{th:main}.
Given the system of loops $L$ with vertex $r$ as in \cref{lem:crossing_systemloops}, $G$ has an embedding $\Pi_G$ of  size $(m,n)$ with $m \le c \cdot |E(G)| \le c\cdot (3n+3g-6) < 3c\cdot (n+g)$, where $c = 4$ if $\mS$ is orientable and $c = 30$ if $\mS$ is non-orientable. 
From \cref{th:tec}, $\Pi_G$ is a p-minor of $\PU_{\sigma,m+2n} = \PU_{\sigma,(3c+2)n + O(g)}$, where $\sigma = \sigma(\Pi_G)$. 
From \cref{prop:PU}, the minor-universal graph $\sU_{n,\mS}$, which is $\sw(\PU_{\sigma,(3c+2)n + O(g)}) \setminus \set{r}$, has at most $\frac{1}{2} |\sigma|^2 (m+2n)^2 = 2g^2((3c+2)n + O(g))^2$ vertices since $\sigma$ is minimal and $|\sigma| = 2g$.

For the sake of simplicity, assume that $g = o(n)$. The number of vertices of $\sU_{n,\mS}$ is therefore (plugging $c=4$ or $c=30$):
%
\begin{itemize}[noitemsep]
  \item $(392+o(1)) \cdot (gn)^2$ if $\mS$ is orientable; and
  \item $(16928+o(1)) \cdot (gn)^2$ if $\mS$ is non-orientable.
\end{itemize}

Obviously, the constant $c = 30$ for the non-orientable case plays an important role in the final result. However, it can be improved as follows. According to Fuladi~\cite{Fuladi23p}, starting with the orienting curve of Matou{\v{s}}ek et al.~\cite{MSTW13}, the construction of Fuladi et al.~\cite{FHdM24} can also produce, in any non-orientable surface of Euler genus $g$, a system of loops providing a polygonal schema of (non-canonical) signature $a_0 a_0 a_1 a_2 \bar{a_1} \bar{a_2} \dots a_{g-2} a_{g-1} \bar{a}_{g-2} \bar{a}_{g-1}$. 
In other words, the polygonal schema consists in one cross-cap and $g-1$ handles. This is compatible with the statement of \cite[Theorem~6.3]{GX13}.
Furthermore, all the loops, but the first one corresponding to $a_0$, cut at most~$4$ times each edge of the embedded graph. For the side $a_0$, this is three times more, so~$12$ crossings at most per edge. 
Then, we can adapt the polygonal embedding $\PU_{\sigma,m+2n}$ so that the two first sides have $3c \cdot (n+g) = 36 (n+g)$ non-corner vertices (so with $c=12$), and the $|\sigma|-2$ remaining sides have $3c \cdot (n+g) = 12 (n+g)$ non-corner vertices (so with $c=4$). 
Again, assuming $g = o(n)$, the number of vertices is then at most
$2(g-1)^2 ((3\cdot 4 + 2)n + O(g))^2 + O(n+g)^2 = (392 + o(1))\cdot (gn)^2$, as much as the orientable case.


\section{Proofs of the lemmas}
\label{sec:proofs}

Throughout the proofs, if $H$ is a minor of $G$, it will sometimes be more convenient to say that $G$ is a \emph{major} of $H$. Similarly, we say that a polygonal embedding $\Pi_G$ is a \emph{p-major} of $\Pi_H$ if $\Pi_H$ is a p-minor of $\Pi_G$.

Since the plane graphs we will consider are actually polygonal embeddings, the boundary of the outerface is a cycle. Whenever we talk about the outerface of a plane graph, we refer to this cycle. 
We say that two plane graphs $H$ and $G$, where $H$ is a minor of $G$, have the \emph{same outerface} if $H$ is obtained from $G$ without removing any vertex or edge from the outerface of $G$, and without contracting any edge with both endpoints on the outerface of $G$. It follows that:

\begin{property}\label{property:p-minorOuterface}
  If a polygonal embedding $\Pi_H$ is a minor of polygonal embedding $\Pi_G$ with same outerface, border and signature, then $\Pi_H$ is a p-minor of $\Pi_G$.
\end{property}

Recall that a polygonal embedding is nothing else than a plane graph with a given cycle outerface, border and signature. \cref{property:p-minorOuterface} allows us to consider that we can work on a plane graph, as long as we keep the same outerface and border when constructing a major.

In \cref{subsec:planar}, we consider plane graphs with a fixed cycle outerface (i.e., with an outerface whose boundary is a cycle), and we give tools to construct a plane major that preserves this outerface. These tools will be used later in the construction of the p-major of $\Pi$ in \cref{subsec:proofLemOuter}.

\subsection{Considering plane graphs}
\label{subsec:planar}

Let $G$ be a planar graph $G$ with a given planar embedding and having a cycle outerface $O$. We denote by $n = |V(G)| - |V(O)|$ the number of internal vertices of $G$. 

Since the boundary of the outerface of $G$ is a connected graph (being a cycle), we can assume w.l.o.g. that $G$ is connected. If this is not the case, we can, for instance, triangulate all faces except the outer face. By doing this we get a plane major of $G$ without altering its number of vertices nor its cycle outerface.


Let $\cF$ be a forest, subgraph of $G$, composed of $k$ trees denoted by $T_1,\dots,T_k$, and constructed as follows:
\begin{itemize}[noitemsep]
  \item Start with a spanning tree $T$ of $G$;
  \item For any path in $T$ connecting two vertices of $O$, remove one of its edges;
  \item Remove all remaining isolated vertices that are in $O$; and
  \item Root each $T_i$ at the only one vertex of $T_i \cap O$, denoted
    by $r_i$.
\end{itemize}

Observe that $\cF$ spans all internal vertices of $G$, thus $|V(F)| = n + k$ and $|E(F)| = n$. Since we consider a planar major of $G$, it is quite tempting to add edges to force $\cF$ to be a single tree. 
However, we cannot do that and suppose that the inner vertices induced a connected subgraph of $G$, as there could be an edge between two vertices of the outerface separating the internal vertices into two parts.


Now we construct a major of $G$, denoted by $G_1$, that consists in \emph{blowing up} each tree $T_i$ into a graph $T'_i = C_i \cup E_i$, composed of a cycle (or an edge) $C_i$ plus a set $E_i$ of extra edges. More precisely, blowing up $T_i$ consists in traversing the tree according to a plane Euler tour from $r_i$ (or in other words, a walk along the boundary of the outerface of $T_i$ considered here as a single graph). 
The cycle $C_i$ is constructed iteratively by adding a new vertex at each vertex of $T_i$ visited along this tour, two consecutive vertices on this tour being connected by an edge of $C_i$. See \cref{fig:blowup} for an illustration. 
Since $\cF$ has no isolated vertices, $T_i$ has at least one edge. If $T_i$ has exactly one edge, then $C_i$ consists of one single edge. 
The set $E_i$ consists of all edges connecting any two vertices $u,v$ of $C_i$ that correspond to the same visited vertex of $T_i$ and appear consecutively during the traversal. In other words, we add in $E_i$ an edge between $u,v$ of $C_i$ if they corresponds to same visited vertex $w$ of $T_i$ and that none of the vertices between $u$ and $v$ in $C_i$ corresponds to $w$.
This completes the description of blowing up $T_i$ into $T'_i$. Note that $T'_i$ is outerplanar.

\begin{myfigure}
  \scalebox{1}{\begin{picture}(0,0)%
\includegraphics{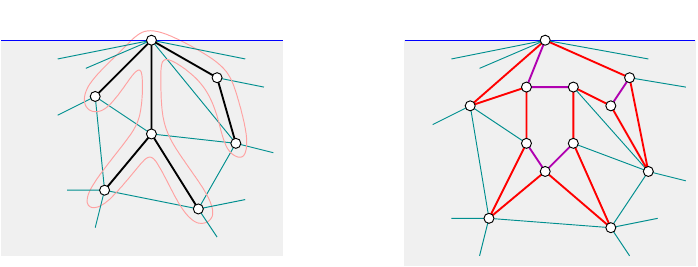}%
\end{picture}%
\setlength{\unitlength}{3947sp}%
\begingroup\makeatletter\ifx\SetFigFont\undefined%
\gdef\SetFigFont#1#2#3#4#5{%
  \reset@font\fontsize{#1}{#2pt}%
  \fontfamily{#3}\fontseries{#4}\fontshape{#5}%
  \selectfont}%
\fi\endgroup%
\begin{picture}(5574,2133)(589,-2323)
\put(4921,-1121){\makebox(0,0)[lb]{\smash{{\SetFigFont{11}{13.2}{\familydefault}{\mddefault}{\updefault}{\color[rgb]{.69,0,.69}$E_i$}%
}}}}
\put(4876,-361){\makebox(0,0)[lb]{\smash{{\SetFigFont{11}{13.2}{\familydefault}{\mddefault}{\updefault}{\color[rgb]{0,0,0}$r_i$}%
}}}}
\put(1726,-361){\makebox(0,0)[lb]{\smash{{\SetFigFont{11}{13.2}{\familydefault}{\mddefault}{\updefault}{\color[rgb]{0,0,0}$r_i$}%
}}}}
\put(4501,-1411){\makebox(0,0)[lb]{\smash{{\SetFigFont{11}{13.2}{\familydefault}{\mddefault}{\updefault}{\color[rgb]{1,0,0}$C_i$}%
}}}}
\put(751,-361){\makebox(0,0)[lb]{\smash{{\SetFigFont{11}{13.2}{\familydefault}{\mddefault}{\updefault}{\color[rgb]{0,0,0}$O$}%
}}}}
\put(751,-1486){\makebox(0,0)[lb]{\smash{{\SetFigFont{11}{13.2}{\familydefault}{\mddefault}{\updefault}{\color[rgb]{0,0,0}$G$}%
}}}}
\put(1726,-2161){\makebox(0,0)[lb]{\smash{{\SetFigFont{11}{13.2}{\familydefault}{\mddefault}{\updefault}{\color[rgb]{0,0,0}$T_i$}%
}}}}
\put(3976,-1561){\makebox(0,0)[lb]{\smash{{\SetFigFont{11}{13.2}{\familydefault}{\mddefault}{\updefault}{\color[rgb]{0,0,0}$G_1$}%
}}}}
\put(4901,-2211){\makebox(0,0)[lb]{\smash{{\SetFigFont{11}{13.2}{\familydefault}{\mddefault}{\updefault}{\color[rgb]{0,0,0}$T'_i$}%
}}}}
\end{picture}%
}
  \caption{Blowing up the tree $T_i$ (with black edges) into
    $T'_i = C_i \cup E_i$ (with red and violet edges). The root $r_i$ of $T_i$ is now the root of $C_i$ in $T'_i$.
  }
  \label{fig:blowup}
\end{myfigure}

The graph $G_1$ is then obtained from $G$ by replacing each $T_i$ with its blown-up version $T'_i$. This is possible because the walk along the boundary of the outerface of $T_i$ and $T'_i$ are isomorphic, so $T'_i$ can be plugged into $T_i$ by keeping the plane embedding of $G$.

\begin{claim}\label{claim:G1}
  The graph $G_1$ has the following properties:
  \begin{itemize}[noitemsep]
    \item $G_1$ is a plane major of $G$ with same outerface $O$;
    \item $C_1, \dots, C_k$ form a partition of internal vertices of $G_1$.
    \item Each $C_i$ contains exactly one vertex in $O$, called its \emph{root};
    \item $G_1$ has $2n - k$ internal vertices;
  \end{itemize}
\end{claim}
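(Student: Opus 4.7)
The plan is to verify the four bullet points in order, working directly from the construction of $G_1$ from $G$.

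First, I would establish that $G_1$ is a plane major of $G$ with outerface $O$. The minor direction is by reversing the blow-up: for each $i$, contract all edges of $E_i$. This identifies all copies in $C_i$ of each vertex $w\in V(T_i)$ to a single vertex. The image of $C_i$ is then a multigraph in which each edge of $T_i$ appears twice, once per direction of the Euler tour; discarding one copy of each parallel pair returns $T_i$. Doing this for every $i$ recovers $G$ inside $G_1$, so $G$ is a minor of $G_1$. Because the entire blow-up happens strictly inside the plane region formerly occupied by $T_i$, the cycle $O$, its edges, and its cyclic order are untouched, and no contracted edge has both endpoints on $O$; so the outerface is preserved.

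Next, I would argue planarity of $G_1$. Each $T_i' = C_i\cup E_i$ can be drawn in the closed disc that held the plane embedding of $T_i$: walking around the boundary of $T_i$ (viewed as a plane graph) gives a cyclic sequence of vertex-occurrences that matches, by construction, the cyclic sequence of vertices of $C_i$, so $C_i$ can be substituted for $T_i$ while respecting every edge of $G\setminus E(T_i)$. The chords in $E_i$ join consecutive copies (along $C_i$) of the same vertex $w\in V(T_i)$; between two such consecutive copies, the Euler tour explores exactly a subtree rooted at $w$, so the arcs spanned by distinct chords of $E_i$ form a laminar family of intervals on $C_i$. Hence $E_i$ admits a non-crossing drawing inside the disc bounded by $C_i$, and $G_1$ is plane with outerface $O$.

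Third, I would count and partition. The Euler tour of $T_i$ traverses each edge twice; identifying the initial and final occurrences of $r_i$ into a single vertex of the cycle yields $|V(C_i)| = 2|E(T_i)|$. By construction, this merged start/end copy of $r_i$ is placed at the position of $r_i\in O$ and becomes the anchor $a_i$; every other copy of $r_i$, and every copy of a non-root vertex of $T_i$, lies strictly inside the disc of $T_i$ and is therefore internal to $G_1$. Each $C_i$ thus contains exactly one vertex of $O$, namely $a_i$. Since the trees $T_1,\dots,T_k$ are pairwise vertex-disjoint and cover all internal vertices of $G$, the sets $V(C_i)\setminus\{a_i\}$ are pairwise disjoint and their union is precisely the set of internal vertices of $G_1$. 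Summing, $G_1$ has $\sum_{i=1}^{k}(|V(C_i)|-1) = \sum_{i=1}^{k}(2|E(T_i)|-1) = 2|E(\cF)|-k = 2n-k$ internal vertices.

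The step I expect to be the trickiest is the planarity argument for the chord set $E_i$. Concretely, one has to check that matching consecutive occurrences of each vertex in the Euler-tour word of $T_i$ yields a balanced parenthesization (the laminar property used above), and that the resulting local drawing of $T_i'$ plugs coherently into the ambient embedding of $G$, i.e., preserves the cyclic order in which the outside of $T_i$ meets its attachments; all other parts of the claim are then immediate bookkeeping.
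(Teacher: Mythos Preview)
Your proposal is correct and follows essentially the same line as the paper's proof: contract the edges of each $E_i$ to recover $T_i$ (hence $G$), note that the outerface $O$ is untouched, use $|V(C_i)|=2|E(T_i)|$, and sum to get $2n-k$ internal vertices. Your explicit laminar-family argument for the planarity of $T_i'=C_i\cup E_i$ is more detailed than what the paper gives---the paper simply asserts, just before the claim, that the boundary walks of $T_i$ and $T_i'$ are isomorphic so that $T_i'$ can be plugged into the plane embedding of $G$.
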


\begin{proof}
 It is straightforward to verify that $T'_i$ is a major of $T_i$, since contracting all edges in $E_i$ yields a graph isomorphic to $T_i$. By doing this for every $T'_i$ in $G_1$ we get exactly $G$ (up to some isomorphism) since these are the only differences.
  
 We have seen that the trees $T_i$ of $\cF$ form a partition of the internal vertices of $G$. It follows that $T'_i$ forms a partition of the internal vertices of $G_1$, and $C_i$ as well since $V(T'_i) = V(C_i)$.
  
 Each $T'_i$ intersects $O$ in exactly one vertex, $r_i$, which was the root of $T_i$. In particular, no edge of $T'_i$ belongs to $O$. So, contracting edges of $E_i$ to get $T_i$ cannot affect the outerface, and thus $G_1$ and $G$ have the same outerface $O$.

 To obtain $C_i$ from $T_i$, each edge of $T_i$ is traversed twice. It follows that $|V(C_i)| = 2|E(T_i)|$. We have seen that $|E(F)| = \sum_{i=1}^k |E(T_i)| = n$. Therefore, the number of internal vertices of $G_1$ is $\sum_{i=1}^k (|V(C_i)| - 1) = 2n - k$ since do not count roots.
\end{proof}

We now construct a major $G_2$ from $G_1$ by applying a \emph{root-splitting} operation at every root of $G_1$, as described below. We refer to \cref{fig:split} for an illustration. 
Consider a cycle $C_i$ with root $r_i$ in $G_1$, and let $u,v$ be the neighbors of $r_i$ such that both edges $u-r_i$ and $r_i-v$ belong to the boundary of the outerface.
We consider a cyclic ordering of the neighborhood of $r_i$ around $r_i$, starting at $u$ and ending at $v$. According to this ordering, let $w$ be the first neighbor of $r_i$ such that $r_i-w$ belongs to $C_i$, and let $z$ be the neighbor of $r_i$ in $T_i'$ that is right after $w$. If $C_i$ consists of one edge, we let $z = w$. Notice that the edge $r_i-z$ belongs to either $C_i$ or $E_i$.
The root-splitting operation of $r_i$ consists in: (1) replacing $r_i$ by an edge $a_i-b_i$, called the \emph{anchor} corresponding to $r_i$; (2) reconnecting to $a_i$ the neighbors of $r_i$ going from $u$ to $w$; and (3) reconnecting to $b_i$ the neighbors of $r_i$ going from $z$ to $v$. 

\begin{myfigure}
  \scalebox{1}{\begin{picture}(0,0)%
\includegraphics{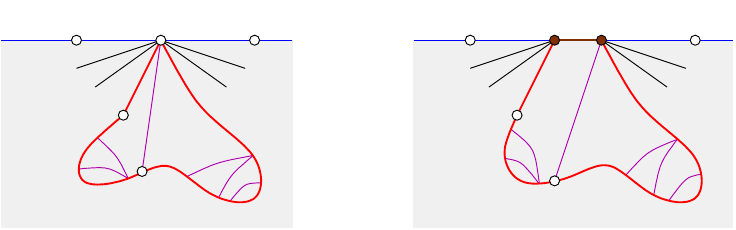}%
\end{picture}%
\setlength{\unitlength}{3947sp}%
\begingroup\makeatletter\ifx\SetFigFont\undefined%
\gdef\SetFigFont#1#2#3#4#5{%
  \reset@font\fontsize{#1}{#2pt}%
  \fontfamily{#3}\fontseries{#4}\fontshape{#5}%
  \selectfont}%
\fi\endgroup%
\begin{picture}(5874,1833)(514,-2023)
\put(4426,-1111){\makebox(0,0)[lb]{\smash{{\SetFigFont{11}{13.2}{\familydefault}{\mddefault}{\updefault}{\color[rgb]{0,0,0}$w$}%
}}}}
\put(1876,-1336){\makebox(0,0)[lb]{\smash{{\SetFigFont{11}{13.2}{\familydefault}{\mddefault}{\updefault}{\color[rgb]{.69,0,.69}$E_i$}%
}}}}
\put(5326,-1261){\makebox(0,0)[lb]{\smash{{\SetFigFont{11}{13.2}{\familydefault}{\mddefault}{\updefault}{\color[rgb]{.69,0,.69}$F_i$}%
}}}}
\put(2401,-1186){\makebox(0,0)[lb]{\smash{{\SetFigFont{11}{13.2}{\familydefault}{\mddefault}{\updefault}{\color[rgb]{1,0,0}$C_i$}%
}}}}
\put(601,-361){\makebox(0,0)[lb]{\smash{{\SetFigFont{11}{13.2}{\familydefault}{\mddefault}{\updefault}{\color[rgb]{0,0,0}$O$}%
}}}}
\put(601,-1261){\makebox(0,0)[lb]{\smash{{\SetFigFont{11}{13.2}{\familydefault}{\mddefault}{\updefault}{\color[rgb]{0,0,0}$G_1$}%
}}}}
\put(3976,-1411){\makebox(0,0)[lb]{\smash{{\SetFigFont{11}{13.2}{\familydefault}{\mddefault}{\updefault}{\color[rgb]{0,0,0}$G_2$}%
}}}}
\put(5926,-1186){\makebox(0,0)[lb]{\smash{{\SetFigFont{11}{13.2}{\familydefault}{\mddefault}{\updefault}{\color[rgb]{1,0,0}$P_i$}%
}}}}
\put(4876,-361){\makebox(0,0)[lb]{\smash{{\SetFigFont{11}{13.2}{\familydefault}{\mddefault}{\updefault}{\color[rgb]{.5,.17,0}$a_i$}%
}}}}
\put(4201,-361){\makebox(0,0)[lb]{\smash{{\SetFigFont{11}{13.2}{\familydefault}{\mddefault}{\updefault}{\color[rgb]{0,0,0}$u$}%
}}}}
\put(1051,-361){\makebox(0,0)[lb]{\smash{{\SetFigFont{11}{13.2}{\familydefault}{\mddefault}{\updefault}{\color[rgb]{0,0,0}$u$}%
}}}}
\put(1276,-1111){\makebox(0,0)[lb]{\smash{{\SetFigFont{11}{13.2}{\familydefault}{\mddefault}{\updefault}{\color[rgb]{0,0,0}$w$}%
}}}}
\put(5251,-361){\makebox(0,0)[lb]{\smash{{\SetFigFont{11}{13.2}{\familydefault}{\mddefault}{\updefault}{\color[rgb]{.5,.17,0}$b_i$}%
}}}}
\put(6001,-361){\makebox(0,0)[lb]{\smash{{\SetFigFont{11}{13.2}{\familydefault}{\mddefault}{\updefault}{\color[rgb]{0,0,0}$v$}%
}}}}
\put(1726,-361){\makebox(0,0)[lb]{\smash{{\SetFigFont{11}{13.2}{\familydefault}{\mddefault}{\updefault}{\color[rgb]{0,0,0}$r_i$}%
}}}}
\put(2476,-361){\makebox(0,0)[lb]{\smash{{\SetFigFont{11}{13.2}{\familydefault}{\mddefault}{\updefault}{\color[rgb]{0,0,0}$v$}%
}}}}
\put(1576,-1786){\makebox(0,0)[lb]{\smash{{\SetFigFont{11}{13.2}{\familydefault}{\mddefault}{\updefault}{\color[rgb]{0,0,0}$z$}%
}}}}
\put(4951,-1861){\makebox(0,0)[lb]{\smash{{\SetFigFont{11}{13.2}{\familydefault}{\mddefault}{\updefault}{\color[rgb]{0,0,0}$z$}%
}}}}
\end{picture}%
}
  \caption{Splitting the root $r_i$ of $C_i$ into the anchor $a_i-b_i$.}
  \label{fig:split}
\end{myfigure}

We denote by $P_i$ the path going from $a_i$ to $b_i$ with internal vertices in $V(C_i)\setminus\{r_i\}$, and by $F_i$ the new set of edges corresponding to $E_i$ in $G_2$. Note that $C_i$ and $P_i$ differ by $r_i,a_i,b_i$, and that $E_i$ and $F_i$ differ by the endpoints $a_i,b_i$ in $G_2$ that play the role of $r_i$ in $G_1$. In particular, if $C_i$ is an edge $r_i-w$, we get a triangle $a_i-w-b_i$.

\begin{claim}\label{claim:G2}
  The graph $G_2$ is a Hamiltonian plane major of $G_1$ with $|V(O)| + 2n$ vertices.
\end{claim}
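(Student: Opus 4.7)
The plan is to verify separately the three properties claimed for $G_2$ (plane, major of $G_1$, Hamiltonian), and then read off the vertex count. The splitting operations happen at the $k$ pairwise distinct anchors $a_1,\dots,a_k$, so after checking that each individual split behaves well, the global properties follow by independence.

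For the plane-major part, I would argue as follows. The split at $a_i$ replaces $a_i$ by two vertices $a'_i, a''_i$ joined by the edge-anchor, with the neighbors of $a_i$ distributed between $a'_i$ and $a''_i$ in a way consistent with their cyclic order around $a_i$. This preserves planarity, and contracting the edge-anchor $a'_i - a''_i$ reunites its endpoints into a single vertex with exactly the former neighborhood of $a_i$. Applying this contraction to every edge-anchor of $G_2$ therefore recovers $G_1$ up to isomorphism, showing that $G_1$ is a minor of $G_2$ and that $G_2$ is plane. The vertex count then follows immediately: each split adds one vertex, so $|V(G_2)| = |V(G_1)| + k = (|V(O)| + 2n - k) + k = |V(O)| + 2n$ using \cref{claim:G1}.

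For Hamiltonicity, the key observation is that the edge-anchor $a'_i - a''_i$ is drawn in the face of $G_1$ incident to the outerface side of $a_i$ (the face containing $u, a_i, v$). Consequently the outerface $O'$ of $G_2$ is the cycle obtained from $O$ by replacing each occurrence of $a_i$ by the edge $a'_i - a''_i$, so every edge-anchor lies on $O'$. Moreover, cutting the cycle $C_i$ at $a_i$ produces a path $P_i$ from $a'_i$ to $a''_i$ in $G_2$ that visits every internal vertex of $G_2$ inherited from $T'_i$ (in the degenerate case where $T_i$ is a single edge $a_i - w$, the path $P_i$ is simply $a'_i - w - a''_i$). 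I would then form the cycle $H$ by starting from $O'$ and, for each $i$, rerouting the edge $a'_i - a''_i$ through $P_i$. Since the edge-anchors are pairwise disjoint edges of $O'$, these local reroutings yield a simple cycle. By construction $H$ visits every vertex of $O'$ together with the interior of every $P_i$, and by the partition property of \cref{claim:G1} these together cover all of $V(G_2)$; hence $H$ is Hamiltonian.

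The main obstacle I anticipate is justifying cleanly that $O'$ is precisely this cycle. This rests on a careful description of where each edge-anchor sits in the planar embedding after splitting, and requires attention to the degenerate case $w_1 = w_2$ (when $T_i$ reduces to a single edge), which must be treated so that the cyclic ordering around $a'_i$ and $a''_i$ still places the edge-anchor on the outerface side. Once this topological fact is in hand, the remaining arguments are routine bookkeeping.
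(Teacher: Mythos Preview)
Your proposal is correct and follows essentially the same approach as the paper: contract the edge-anchors to recover $G_1$, build the Hamiltonian cycle by starting from the outerface of $G_2$ and rerouting each edge-anchor $a'_i{-}a''_i$ through $P_i$, and compute $|V(G_2)| = |V(G_1)| + k = |V(O)| + 2n$ via \cref{claim:G1}. The only difference is cosmetic: you spell out more carefully why the edge-anchor lands on the new outerface $O'$ and flag the degenerate single-edge case, whereas the paper leaves these points to the construction and the figure.
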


\begin{proof}
  Clearly, $G_2$ is a planar major of $G_1$, as contracting every anchor of $G_2$ yields $G_1$.

  To construct a Hamiltonian cycle $C$ in $G_2$, we start with the cycle outerface of $G_2$, and replace each anchor $a_i-b_i$ by $P_i$. Since by \cref{claim:G1}, the $C_i$'s partition all internal vertices of $G_1$, and $V(C_i)\setminus\{r_i\}=V(P_i)\setminus\{a_i,b_i\}$ for each $i$, it holds that the internal vertices of $G_2$ are spanned by the $P_i$'s, which are pairwise disjoint. Thus $C$ forms a Hamiltonian cycle for $G_2$.

  The graph $G_1$ has $|V(O)|$ vertices on its outerface $O$ and $2n - k$ internal vertices (by \cref{claim:G1}). Splitting each of the $k$ roots adds $k$ vertices in $G_2$. Therefore, $|V(G_2)| = |V(G_1)| + k = (|V(O)| + (2n-k)) + k = |V(O)| + 2n$.
\end{proof}

Note that root-splitting operation alters the outerface. So, for the proof of \cref{lem:outer}, that needs to consider polygonal embeddings whose twin sides have to be of same length (among other things), we will need to make extra transformation on the outerface as explained later in \cref{subsec:proofLemOuter}.

Before that, we observe that the two previous transformations (by \cref{claim:G1} and \cref{claim:G2}) provide an alternative proof of~\cite[Theorem~(1.4)]{RST94} that states that every $n$-vertex planar graph is a minor of a planar Hamiltonian graph with at most
$2n$ vertices. Combined with~\cite[Theorem~(1.3)]{RST94}, that states that $n$-vertex planar Hamiltonian graphs are minor of an $(n\times n)$-grid, we can derive the minor-universal grid theorem of planar graphs~\cite[Theorem~(1.5)]{RST94}. 

In this context, a \emph{circuit} in a graph extends the notion of cycle to subgraph composed of one single vertex or edge.
A circuit is \emph{separating} if its deletion increases the number of connected components of the graph, and it is \emph{non-separating} otherwise.
It is not difficult to see that every graph has a non-separating circuit with at least one vertex.

\begin{theorem}\label{prop:altRST}
  Every planar graph with $n$ vertices and with a non-separating circuit of $k$ vertices is minor of a Hamiltonian planar graph with at most $2n-k$ vertices. In particular, every triangulation with $n\ge 4$ vertices is minor of a Hamiltonian planar graph with at most $2n - 4$ vertices.
\end{theorem}

\begin{proof}
  Let $G$ be a plane graph with $n$ vertices and with a non-separating circuit of $k$ vertices, denoted by $O$. W.l.o.g., we can assume that $O$ is a cycle, i.e., $k \ge 3$, since otherwise we can triangulate $G$ and consider its triangle outerface as non-separating circuit (recall that $G$ has at least four vertices).
  Moreover, the triangulation is a major of $G$ and we will get a minor of $2n - 3$ vertices that is less than $2n - k$ if $k < 3$.

  First, we embed $G$ in the plane such that all the vertices of $G\setminus O$ are inside $O$. 
  Note that $O$ is not necessarily the boundary of the outerface of this embedding, since there can be some chords outside of $O$. Denote by $C$ the set of these chords. Let $G'$ be the plane graph obtained from $G\setminus C$ such that $O$ is the boundary of its outerface which is a cycle.

  Since $G'$ is a plane graph with cycle outerface $O$, we can construct the major $G_1$ as in \cref{claim:G1}, and the Hamiltonian plane major $G_2$ from $G_1$ (by \cref{claim:G2}). Due to the anchor, the outerface of $G_2$ is a subdivision of $O$. 
  Therefore, we can add all the chords of $C$ in $G_2$ that we can embedded outside the outerface of $G_2$, while preserving its planarity. 
  Thus, $G_2 \cup C$ is the desired major of $G$.

  The graph $G'$ has $n' = n - k$ internal vertices by construction. So, by \cref{claim:G2}, $G_2$ (and $G_2 \cup C$ as well) has $k + 2n' = 2n - k$ vertices as claimed.

  If $G$ is a triangulation with at least four vertices, then $G$ has a non-separating circuit of length $4$. 
  Indeed, the outerface is a non-separating triangle that can be increased by one using any incident internal face. So, the above property construction applies with $k = 4$.
\end{proof}

Actually, the proof of~\cite[Theorem~(1.4)]{RST94} gives the same upper bound of $2n - 4$ vertices (assuming $n\ge 4$), but relies on Whitney's Theorem\footnote{Every planar triangulation without separating triangle has a Hamiltonian cycle.}~\cite{Whitney31}.
In contrast \cref{prop:altRST} gives an explicit and direct construction. 
We believe that it gives an interesting construction of the major where the Hamiltonian cycle cuts the graph into two sides. Inside the cycle are the copies of edges that where initially in $G$, and outside of the cycle are the edges introduced by the operations of blowing up trees and splitting roots: contracting those edges results in $G$.

\subsection{Proof of Lemma~\ref{lem:outer}}
\label{subsec:proofLemOuter}

The goal of this section is to show that:

\lemouter*

For every polygonal embedding, we define a total ordering $\prec$ on the vertices of its cycle outerface as follows: $u \prec v$ if $u$ is visited before $v$ when traversing the cycle outerface starting from the first corner and in the direction such that corners are encountered in the same order they appear in the border of the polygonal embedding.

Let $\Pi$ be a polygonal embedding of size $(m,n)$. Up to replacing $\Pi$ with a p-major, we may assume that each corner of $\Pi$ lies on the boundary of an internal triangle; if this is not the case, we can simply connect the two neighbors of each corner (which must have degree two) with an edge. This does not alter the size of $\Pi$.
This is to prevent any further edge insertion (for example while triangulating the graph) that could increase the degree of corners which must be exactly two in any polygonal embedding.

As the boundary of the outerface of $\Pi$ is a cycle, the construction in \cref{claim:G1} applies to $\Pi$ playing the role of $G$.
So, let $\Pi_1$ be the polygonal embedding obtained from \cref{claim:G1}, with $\Pi$ as the role of $G$ and $\Pi_1$ as the role of $G_1$. The embedding $\Pi_1$ admits $\Pi$ as a minor while preserving the cycle outerface (border and signature does not matter for \cref{claim:G1}). Thus $\Pi$ is a p-minor of $\Pi_1$. Keeping the same notation as in \cref{claim:G1}, $\Pi_1$ has exactly $2n - k$ internal vertices that are partitioned by the cycles $C_1,\dots,C_k$, each $C_i$ intersecting the outerface in one vertex, its root $r_i$. 

Unfortunately, as noted above, the transformation from $G_1$ to $G_2$ described in \cref{claim:G2} does not preserve the polygonal embedding structure: twin edges in $G_1$ cease to be twins in $G_2$ owing to the newly created anchors. Consequently, this breaks the p-major sequence under construction.
To preserve the polygonal embedding structure, we adapt the definition of root-splitting to this setting by modifying the twin sides accordingly whenever a root is split, thereby ensuring that every vertex on a given side has a twin on its corresponding twin side.
We refer to this procedure as a \emph{twin-splitting} operation, which is defined more precisely below. This operation is applied iteratively to each root, yielding a sequence of polygonal embeddings $\Pi_1^{0},\dots,\Pi_1^{k}$, where $\Pi_1^0 = \Pi_1$, such that each $\Pi_1^t$ is obtained from $\Pi_1^{t-1}$ via a twin-splitting operation.


More precisely, at step~$t$, consider three consecutive vertices $u-r-v$ on a side of $\Pi_1^{t-1}$, where $r$ is a root and $u\prec r \prec v$. Let $u',r',v'$ denote the twin vertices of $u,r,v$, respectively.
Since a root cannot be a corner and every side has a twin side, the vertices $u,v,u',v'$ are well defined.
NNote that if the twin sides have the same orientation in the signature (e.g., if they correspond to a handle on the surface under a canonical signature), then $v'\prec r'\prec u'$; whereas if they have opposite orientations (e.g., in a cross-cap), then $u'\prec r'\prec v'$.
The twin-splitting of $r$ consists in splitting the root $r$ followed by (see \cref{fig:twin}):
\begin{itemize}[noitemsep]
  \item a subdivision of the edge $r'-v'$, if $r \prec r'$ or $v'\prec r'$; or
  \item a subdivision of the edge $u'-r'$, if $r' \prec r$ and $r'\prec v'$.
\end{itemize}
In the former case, the edge $r'-v'$ is replaced by the path $r'-s-v'$, and $r'-s$ becomes the twin edge of the anchor $a-b$ corresponding to $r$. In the latter case, the edge $u'-r'$ is replaced by the path $u'-s-r'$, and $s-r'$ becomes the twin edge of $a-b$. All other twin relations remain unchanged.
See \cref{fig:twin} for an illustration. See \cref{fig:twin} for an illustration. Note that $r'$ may itself be a root that will eventually be split at a subsequent step $t' > t$.

\begin{myfigure}
  \scalebox{1}{\begin{picture}(0,0)%
\includegraphics{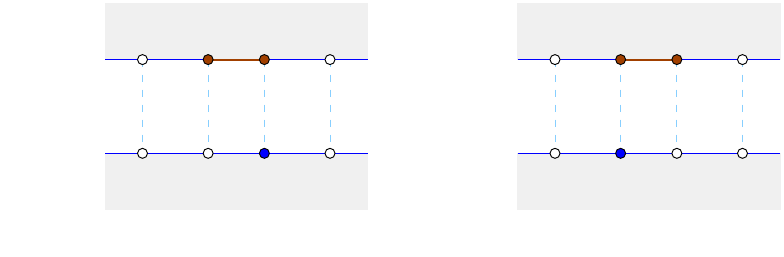}%
\end{picture}%
\setlength{\unitlength}{3947sp}%
\begingroup\makeatletter\ifx\SetFigFont\undefined%
\gdef\SetFigFont#1#2#3#4#5{%
  \reset@font\fontsize{#1}{#2pt}%
  \fontfamily{#3}\fontseries{#4}\fontshape{#5}%
  \selectfont}%
\fi\endgroup%
\begin{picture}(6252,2035)(-89,-2234)
\put(4276,-2161){\makebox(0,0)[lb]{\smash{{\SetFigFont{11}{13.2}{\familydefault}{\mddefault}{\updefault}{\color[rgb]{0,0,0}case 2: $r'\prec r$ and $r'\prec v'$}%
}}}}
\put(5776,-1711){\makebox(0,0)[lb]{\smash{{\SetFigFont{11}{13.2}{\familydefault}{\mddefault}{\updefault}{\color[rgb]{0,0,0}$v'$}%
}}}}
\put(4276,-511){\makebox(0,0)[lb]{\smash{{\SetFigFont{11}{13.2}{\familydefault}{\mddefault}{\updefault}{\color[rgb]{0,0,0}$u$}%
}}}}
\put(5776,-511){\makebox(0,0)[lb]{\smash{{\SetFigFont{11}{13.2}{\familydefault}{\mddefault}{\updefault}{\color[rgb]{0,0,0}$v$}%
}}}}
\put(5251,-1711){\makebox(0,0)[lb]{\smash{{\SetFigFont{11}{13.2}{\familydefault}{\mddefault}{\updefault}{\color[rgb]{0,0,0}$r'$}%
}}}}
\put(4801,-886){\makebox(0,0)[lb]{\smash{{\SetFigFont{11}{13.2}{\familydefault}{\mddefault}{\updefault}{\color[rgb]{.63,.25,0}$a$}%
}}}}
\put(5251,-886){\makebox(0,0)[lb]{\smash{{\SetFigFont{11}{13.2}{\familydefault}{\mddefault}{\updefault}{\color[rgb]{.63,.25,0}$b$}%
}}}}
\put(5026,-511){\makebox(0,0)[lb]{\smash{{\SetFigFont{11}{13.2}{\familydefault}{\mddefault}{\updefault}{\color[rgb]{0,0,0}$r$}%
}}}}
\put(1276,-511){\makebox(0,0)[lb]{\smash{{\SetFigFont{11}{13.2}{\familydefault}{\mddefault}{\updefault}{\color[rgb]{0,0,0}$\prec$}%
}}}}
\put(4576,-511){\makebox(0,0)[lb]{\smash{{\SetFigFont{11}{13.2}{\familydefault}{\mddefault}{\updefault}{\color[rgb]{0,0,0}$\prec$}%
}}}}
\put(5401,-511){\makebox(0,0)[lb]{\smash{{\SetFigFont{11}{13.2}{\familydefault}{\mddefault}{\updefault}{\color[rgb]{0,0,0}$\prec$}%
}}}}
\put(976,-2161){\makebox(0,0)[lb]{\smash{{\SetFigFont{11}{13.2}{\familydefault}{\mddefault}{\updefault}{\color[rgb]{0,0,0}case 1: $r\prec r'$ or $v'\prec r'$}%
}}}}
\put(5476,-1711){\makebox(0,0)[lb]{\smash{{\SetFigFont{11}{13.2}{\familydefault}{\mddefault}{\updefault}{\color[rgb]{0,0,0}$\prec$}%
}}}}
\put(4801,-1711){\makebox(0,0)[lb]{\smash{{\SetFigFont{11}{13.2}{\familydefault}{\mddefault}{\updefault}{\color[rgb]{0,0,0}$\prec$}%
}}}}
\put(1976,-1311){\makebox(0,0)[lb]{\smash{{\SetFigFont{11}{13.2}{\familydefault}{\mddefault}{\updefault}{\color[rgb]{0,0,1}$s$}%
}}}}
\put(4826,-1311){\makebox(0,0)[lb]{\smash{{\SetFigFont{11}{13.2}{\familydefault}{\mddefault}{\updefault}{\color[rgb]{0,0,1}$s$}%
}}}}
\put(-74,-736){\makebox(0,0)[lb]{\smash{{\SetFigFont{11}{13.2}{\familydefault}{\mddefault}{\updefault}{\color[rgb]{0,0,0}side of $r$}%
}}}}
\put(-74,-1411){\makebox(0,0)[lb]{\smash{{\SetFigFont{11}{13.2}{\familydefault}{\mddefault}{\updefault}{\color[rgb]{0,0,0}twin side}%
}}}}
\put(976,-1711){\makebox(0,0)[lb]{\smash{{\SetFigFont{11}{13.2}{\familydefault}{\mddefault}{\updefault}{\color[rgb]{0,0,0}$u'$}%
}}}}
\put(2476,-1711){\makebox(0,0)[lb]{\smash{{\SetFigFont{11}{13.2}{\familydefault}{\mddefault}{\updefault}{\color[rgb]{0,0,0}$v'$}%
}}}}
\put(976,-511){\makebox(0,0)[lb]{\smash{{\SetFigFont{11}{13.2}{\familydefault}{\mddefault}{\updefault}{\color[rgb]{0,0,0}$u$}%
}}}}
\put(2476,-511){\makebox(0,0)[lb]{\smash{{\SetFigFont{11}{13.2}{\familydefault}{\mddefault}{\updefault}{\color[rgb]{0,0,0}$v$}%
}}}}
\put(1501,-1711){\makebox(0,0)[lb]{\smash{{\SetFigFont{11}{13.2}{\familydefault}{\mddefault}{\updefault}{\color[rgb]{0,0,0}$r'$}%
}}}}
\put(1501,-886){\makebox(0,0)[lb]{\smash{{\SetFigFont{11}{13.2}{\familydefault}{\mddefault}{\updefault}{\color[rgb]{.63,.25,0}$a$}%
}}}}
\put(1951,-886){\makebox(0,0)[lb]{\smash{{\SetFigFont{11}{13.2}{\familydefault}{\mddefault}{\updefault}{\color[rgb]{.63,.25,0}$b$}%
}}}}
\put(1726,-511){\makebox(0,0)[lb]{\smash{{\SetFigFont{11}{13.2}{\familydefault}{\mddefault}{\updefault}{\color[rgb]{0,0,0}$r$}%
}}}}
\put(2101,-511){\makebox(0,0)[lb]{\smash{{\SetFigFont{11}{13.2}{\familydefault}{\mddefault}{\updefault}{\color[rgb]{0,0,0}$\prec$}%
}}}}
\put(4276,-1711){\makebox(0,0)[lb]{\smash{{\SetFigFont{11}{13.2}{\familydefault}{\mddefault}{\updefault}{\color[rgb]{0,0,0}$u'$}%
}}}}
\end{picture}%
}
  \caption{The twin-splitting operation for a root $r$: it combines splitting the root $r$ (in brown) and subdividing an edge (in blue) incident to its twin $r'$. Twin relations are represented with dotted lines. Note that in this representation, the focus on the twin sides creates the illusion that the inside part of the embedding (in gray) is a disconnected region, which is not true. However, in this representation choice, the inside part, that is locally planar, may appear twisted if fully represented in the case where the twin sides form a cross-cap in the surface.}
  \label{fig:twin}
\end{myfigure}

After completing this twin-splitting process, we denote by $\Pi_2$ (i.e., $\Pi_1^k$) the final embedding obtained from $\Pi_1$ after applying all twin-splitting operations in sequence.

\begin{claim}\label{claim:splitAnchor}
  The embedding $\Pi_2$ has the following properties:
  \begin{itemize}[noitemsep]
    \item $\Pi_2$ is a polygonal embedding that is p-major of $\Pi_1$;
    \item $\Pi_2$ has size $(m+k,2n-k)$;
    \item $P_1, \dots, P_k$ form a partition of the internal vertices of $\Pi_2$;
    \item During the twin-splitting process, no anchor is subdivided and no two anchors are twins.
  \end{itemize}
\end{claim}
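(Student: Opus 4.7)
The plan is to verify the four items in turn. The first three are bookkeeping once the twin splitting operation is spelled out; the fourth, which is the real content of the claim, is the justification for the case distinction on $\prec$ in the definition.

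For the first bullet, the signature is unchanged because each twin splitting inserts vertices strictly inside sides, and no corner is affected since anchors lie in the interior of a side (this uses the preprocessing ensuring every corner sits on an inner triangle, so anchors are distinct from corners). The twin structure of $\Pi_2$ is preserved by design: a new edge-anchor $a'-a''$ is introduced together with its twin edge ($b-s$ or $s-b$) in the same step, so twin sides retain the same length and the vertex twin bijection extends consistently. For the p-major property I would exhibit $\Pi_1$ as a minor of $\Pi_2$ by contracting, for every processed anchor, both the edge-anchor and the new subdivision edge; these contractions are mirrored across twin sides and sewing commutes with them, yielding $\sw(\Pi_1)$ as a minor of $\sw(\Pi_2)$.

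For the second bullet, twin splittings only touch boundary vertices, so the number of internal vertices is still $2n-k$ by \cref{claim:G1}. For the side lengths, I would count per side: let $k_S$ denote the number of anchors of $\Pi_1$ lying on side $S$; each anchor on $S$ contributes exactly one new vertex to $S$ (the splitting turns $a$ into $a',a''$) and exactly one new vertex to its twin side $S'$ (the subdivision vertex $s$). Hence $S$ gains $k_S + k_{S'}$ vertices, and since $k_S + k_{S'} \le k$ the side length is at most $m+k$. The third bullet is then immediate: by \cref{claim:G1} the $C_i$'s partition the internal vertices of $\Pi_1$; the only change inside $C_i$ caused by the twin process is that the outerface vertex $a_i$ is replaced by the two outerface vertices $a'_i, a''_i$, so the internal vertices of $P_i$ are exactly $V(C_i) \setminus \{a_i\}$, and these still partition the (unchanged) set of internal vertices of $\Pi_2$.

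The main obstacle is the fourth bullet, which is the reason for the rule involving $\prec$ in the definition of the operation. I plan an induction on $t$ maintaining the invariant that after step $t$, every edge-anchor already created is twinned to a subdivision edge of the form $\text{(pre-existing vertex)}-s$, never to another edge-anchor, and that no edge-anchor has been subdivided. At step $t$, consider the current anchor $a$ with current twin $\hat{b}$, and suppose for contradiction that the edge selected for subdivision, say $\hat{b}-\hat{y}$, is a previously created edge-anchor $c'-c''$; then $\hat{b}$ and $\hat{y}$ are the two split-halves of an earlier anchor $c$. A case analysis on the relative $\prec$-positions of $a$, $\hat{b}$, and $\hat{y}$ shows that the selection rule (subdivide $\hat{b}-\hat{y}$ precisely when $a \prec \hat{b}$ or $\hat{y} \prec \hat{b}$, otherwise subdivide $\hat{x}-\hat{b}$) always picks the edge on the side of $\hat{b}$ opposite to $c'-c''$, contradicting the assumption. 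The same case analysis applied to the newly created edge-anchor $a'-a''$ and its freshly introduced twin edge shows that this twin is a subdivision edge and not any previously created edge-anchor, completing the inductive step.
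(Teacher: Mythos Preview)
Your plan is correct and follows essentially the same approach as the paper. The first three bullets are handled by the same bookkeeping (your per-side count $k_S+k_{S'}\le k$ is a slightly sharper version of the paper's ``each operation adds at most one vertex to a given side''), and for the fourth bullet the paper does exactly what you outline: it reduces to the situation where the current anchor's twin was itself an anchor already split at an earlier step, and then runs a case analysis on the $\prec$-ordering (phrased there as ``case~1 vs.\ case~2 held for the earlier anchor'') to verify that the subdivision rule lands on the edge adjacent to the edge-anchor rather than on the edge-anchor itself.
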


\begin{proof}
  Each twin-splitting operation inserts edges and splits roots on the outerface of $\Pi_1$. Contracting these newly added edges yields $\Pi_1$, which shows that $\Pi_2$ is a major of $\Pi_1$.
  Furthermore, by construction of the twin-splitting procedure, every vertex and edge of a side in $\Pi_2$ has a well-defined twin. Hence, $\Pi_2$ is a polygonal embedding that is a p-major of $\Pi_1$.
 
  At each twin-splitting operation, the number of vertices of a side increases by at most~1 while the number of internal vertices does not change. 
  Since $\Pi_1$ contains $k$ roots, after applying the twin process, each side of $\Pi_2$ contains at most $m+k$ vertices (corners excluded) and $2n-k$ internal vertices. So, the size of $\Pi_2$ is $(m+k,2n-k)$.
  
  The internal vertices of $\Pi_1$ and $\Pi_2$ coincide. Moreover, after each root-splitting operation, the paths $P_i$ span the same internal vertices as the cycles $C_i$. Thus, by \cref{claim:G1}, the paths $P_i$ also partition the internal vertices of $\Pi_2$.

  It remains to prove the final assertion, namely that no two anchors become twins or are subdivided during the twin-splitting process. Suppose, for the sake of contradiction, that there exists a step $t'$ and a root $r'$ in $\Pi_1^{t'-1}$ such that twin-splitting $r'$ creates twin anchors or subdivides an anchor. Observe that the twin-splitting operation affects only the edges around $r'$ and its twin, thus this can happen only if the twin of $r'$ belongs to an anchor. This occurs only if there is some $t < t'$ such that $r'$ and its twin $r$ were both roots in $\Pi_1^{t-1}$ and we applied a twin-splitting operation on $r$ at step $t$.

  W.l.o.g. assume that $t$ is the step where $r$ is split into $a-b$, and let $s$ be the neighbor of $r'$ in $\Pi_1^t$ resulting from the twin-splitting of $r$ in $\Pi_1^{t-1}$.
  We use for $\Pi_1^{t-1}$ and $\Pi_1^{t}$ the notation as above and as in \cref{fig:twin}: in $\Pi_1^{t-1}$, we have $u-r-v$ on one side, with $u\prec r\prec v$ and with respective twins $u',r',v'$ ($r$ and $r'$ are both roots).
  In $\Pi_1^{t}$, we have $u-a-b-v$ with $u\prec a\prec b\prec v$, and $s$ adjacent to $r'$, resulting from the subdivision of an edge incident to $r'$ (of either $r'-v'$ or $u'-r'$). 

  We remark first that, while an anchor can be incident to the twin of an anchor or the twin of a root, it cannot be incident to another anchor nor a root.
  In particular, in $\Pi_1^{t}$, $r'$ is not incident to an anchor (since $r'$ is a root), and the only twin of an anchor incident to $r'$ is $r'-s$ (since the other edge of the border incident to $r'$ is twin to an edge incident to the anchor $a-b$).
  We claim that this property persists in $\Pi_1^{t'-1}$. Let $w'$ denote the vertex among $u',v'$ adjacent to $r'$, and let $w$ be its twin.
  Since $s$ is not a root nor the twin of a root, before step $t'$, the edge $r'-s$ is not transformed. Suppose that there is some step $t<t''<t'$ in which $w'-r'$ gets subdivided, say into $w'-s'-r'$. This can only happen if $w'$ or $w$ is a root. In both cases, the (twin of an) anchor is $w'-s'$, so the edge $s'-r'$ is still neither an anchor nor the twin of an anchor.
  
  Since we are only interested in the edges around $r'$ and its twin, and thanks to the previous observation and slight abuse of notation for $u',v'$, we can keep for $\Pi_1^{t'-1}$ the same notations as in $\Pi_1^{t}$, namely:  $u-a-b-v$ with $u\prec a\prec b\prec v$ on one side, and either $u'-r'-s-v'$ or $u'-s-r'-v'$ as their twin the other side.
  
  Let us now apply step $t'$ on $r'$ which produces the anchor $a'-b'$ in $\Pi_1^{t'}$.

  Assume first that $r'\prec r$ and $r'\prec v'$ in $\Pi_1^{t-1}$, i.e., the case 2 holds for the twin-splitting of $r$. In that case, we have the path $u'-s-r'-v'$ in $\Pi_1^{t'-1}$, and also $u\prec a\prec b\prec v$, where $a$ is the twin of $s$ and $b$ the twin of $r'$. 
  For the twin-split of $r'$, we are in the case 1, with $s,r',v',a,b,v$ taking the roles of $u,r,v,u',r',v'$ respectively. Indeed, the condition $r\prec r'$ rewrites in $r'\prec b$ which is true.
  Therefore, the twin-splitting of $r'$ subdivides the edge $b-v$ which cannot be an anchor by the previous remark. 
  The anchor $a'-b'$ corresponding to $r'$ is twin with a non-anchor as well. Thus, step $t'$ does not twin or subdivide any anchor in this case.
  
  Assume now that $r\prec r'$ or $v'\prec r'$, i.e., the case 1 holds for the twin-splitting of $r$ in $\Pi_1^{t-1}$. In that case, we have the path $u'-r'-s-v'$ in $\Pi_1^{t'-1}$, and also that $u\prec a\prec b\prec v$, where $a$ is the twin of $r'$ and $b$ the twin of $s$.
  
  If $v'\prec r'$, then we have $v'\prec s\prec r'\prec u'$.
  Then for the twin-splitting of $r'$, the case 1 holds with $s,r',u',b,a,u$ in the roles of $u,r,v,u',r',v'$ respectively. This is because the condition $v'\prec r'$ rewrites in  $u\prec a$ which is true.
  
  If $r\prec r'\prec v'$, then we have $a\prec r'$. 
  Then for the twin-splitting of $r'$, the case 2 holds with $u,r',s,u,a,b$ in the roles of $u,r,v,u',r',v'$ respectively. This is because the condition $r'\prec r$ and $r'\prec v'$ rewrites in  $a\prec r'$ and $a\prec b$ which is true.
  
  Thus, in both cases, the twin-splitting of $r'$ subdivides the edge $u-a$, which cannot be an anchor by the previous remark. Similarly, the anchor $a'-b'$ is paired with a non-anchor. Therefore, in all cases, step $t'$ neither creates twin anchors nor subdivides any anchor, a contradiction.
\end{proof}


The last property of \cref{claim:splitAnchor} (no anchor is subdivided) ensures that the roots in $\Pi_1$ correspond to anchors in $\Pi_2$.

To transform $\Pi_2$ into an outerplanar embedding, we apply a \emph{swapping} operation (defined below) to each of its anchors, thereby generating a sequence of polygonal embeddings $\Pi_2^{0},\dots,\Pi_2^{k}$ with $\Pi_2^0 = \Pi_2$, such that each $\Pi_2^t$ is constructed from $\Pi_2^{t-1}$ via one such operation.

The swapping operation is defined as follows. (We reuse the same notation as in \cref{fig:split} describing the root-splitting operation which has been used to make $\Pi_2$.) 
Let $a_i-b_i$ be an anchor of $\Pi_2^{t-1}$, $P_i$ be the path connecting $a_i$ to $b_i$, and $F_i$ be the edges not in $P_i$ and embedded inside the cycle $P_i \cup \set{a_i - b_i}$. Finally, let $a'_i,b'_i$ be the twin vertices of respectively $a_i,b_i$. 

To obtain $\Pi_2^t$, we first delete the edge $a_i-b_i$ and all edges of $F_i$. This way, the side containing $a_i-b_i$ is extended by $P_i$. Then, on the outerface of $\Pi_2^{t-1}$, we connect $a'_i$ to $b'_i$ by a copy $P'_i$ of $P_i$ and including a copy $F'_i$ of all the edges of $F_i$ such that the order of $P_i$ is preserved. 
It means that if $u$ is traversed before $v$ when going from $a_i$ to $b_i$ on $P_i$, then $u'$ is traversed before $v'$ when going from $a'_i$ in $b'_i$ in $P'_i$ where $u',v'$ are the copies in $P'_i$ of $u,v$ respectively. Furthermore, the twin of any vertex $u$ in $P_i$ is its copy of $u'$ in $P'_i$. This way, the side containing $a'_i-b'_i$ is extended by $P'_i$. See \cref{fig:swap} for an illustration.

\begin{myfigure}
  \scalebox{1}{\begin{picture}(0,0)%
\includegraphics{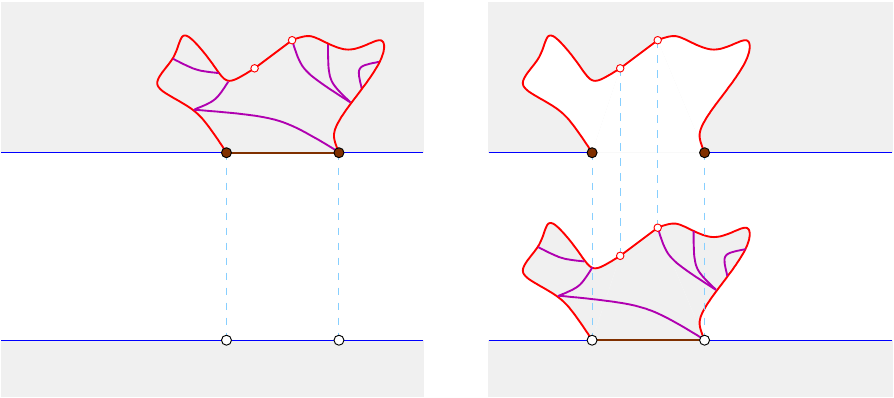}%
\end{picture}%
\setlength{\unitlength}{3947sp}%
\begingroup\makeatletter\ifx\SetFigFont\undefined%
\gdef\SetFigFont#1#2#3#4#5{%
  \reset@font\fontsize{#1}{#2pt}%
  \fontfamily{#3}\fontseries{#4}\fontshape{#5}%
  \selectfont}%
\fi\endgroup%
\begin{picture}(7149,3174)(2839,-2473)
\put(3001,-361){\makebox(0,0)[lb]{\smash{{\SetFigFont{11}{13.2}{\familydefault}{\mddefault}{\updefault}{\color[rgb]{0,0,0}inside $\Pi_2^{t-1}$}%
}}}}
\put(7501,-766){\makebox(0,0)[lb]{\smash{{\SetFigFont{11}{13.2}{\familydefault}{\mddefault}{\updefault}{\color[rgb]{.5,.17,0}$a_i$}%
}}}}
\put(8401,-766){\makebox(0,0)[lb]{\smash{{\SetFigFont{11}{13.2}{\familydefault}{\mddefault}{\updefault}{\color[rgb]{.5,.17,0}$b_i$}%
}}}}
\put(7741,254){\makebox(0,0)[lb]{\smash{{\SetFigFont{10}{12.0}{\familydefault}{\mddefault}{\updefault}{\color[rgb]{1,0,0}$u$}%
}}}}
\put(8063,479){\makebox(0,0)[lb]{\smash{{\SetFigFont{10}{12.0}{\familydefault}{\mddefault}{\updefault}{\color[rgb]{1,0,0}$v$}%
}}}}
\put(7051,-391){\makebox(0,0)[lb]{\smash{{\SetFigFont{11}{13.2}{\familydefault}{\mddefault}{\updefault}{\color[rgb]{1,0,0}$P_i$}%
}}}}
\put(9001,-2311){\makebox(0,0)[lb]{\smash{{\SetFigFont{11}{13.2}{\familydefault}{\mddefault}{\updefault}{\color[rgb]{0,0,0}inside $\Pi_2^{t}$}%
}}}}
\put(7501,-2296){\makebox(0,0)[lb]{\smash{{\SetFigFont{11}{13.2}{\familydefault}{\mddefault}{\updefault}{\color[rgb]{0,0,0}$a'_i$}%
}}}}
\put(8401,-2296){\makebox(0,0)[lb]{\smash{{\SetFigFont{11}{13.2}{\familydefault}{\mddefault}{\updefault}{\color[rgb]{0,0,0}$b'_i$}%
}}}}
\put(3001,-1336){\makebox(0,0)[lb]{\smash{{\SetFigFont{11}{13.2}{\familydefault}{\mddefault}{\updefault}{\color[rgb]{0,0,0}outerface of $\Pi_2^{t-1}$}%
}}}}
\put(8701,-961){\makebox(0,0)[lb]{\smash{{\SetFigFont{11}{13.2}{\familydefault}{\mddefault}{\updefault}{\color[rgb]{0,0,0}outerface of $\Pi_2^{t}$}%
}}}}
\put(7741,-1246){\makebox(0,0)[lb]{\smash{{\SetFigFont{10}{12.0}{\familydefault}{\mddefault}{\updefault}{\color[rgb]{1,0,0}$u'$}%
}}}}
\put(8063,-1021){\makebox(0,0)[lb]{\smash{{\SetFigFont{10}{12.0}{\familydefault}{\mddefault}{\updefault}{\color[rgb]{1,0,0}$v'$}%
}}}}
\put(7051,-1891){\makebox(0,0)[lb]{\smash{{\SetFigFont{11}{13.2}{\familydefault}{\mddefault}{\updefault}{\color[rgb]{1,0,0}$P'_i$}%
}}}}
\put(7951,-1591){\makebox(0,0)[lb]{\smash{{\SetFigFont{11}{13.2}{\familydefault}{\mddefault}{\updefault}{\color[rgb]{.69,0,.69}$F'_i$}%
}}}}
\put(4576,-1861){\makebox(0,0)[lb]{\smash{{\SetFigFont{11}{13.2}{\familydefault}{\mddefault}{\updefault}{\color[rgb]{0,0,0}$a'_i$}%
}}}}
\put(5476,-1861){\makebox(0,0)[lb]{\smash{{\SetFigFont{11}{13.2}{\familydefault}{\mddefault}{\updefault}{\color[rgb]{0,0,0}$b'_i$}%
}}}}
\put(3001,-2311){\makebox(0,0)[lb]{\smash{{\SetFigFont{11}{13.2}{\familydefault}{\mddefault}{\updefault}{\color[rgb]{0,0,0}inside $\Pi_2^{t-1}$}%
}}}}
\put(4576,-766){\makebox(0,0)[lb]{\smash{{\SetFigFont{11}{13.2}{\familydefault}{\mddefault}{\updefault}{\color[rgb]{.5,.17,0}$a_i$}%
}}}}
\put(5476,-766){\makebox(0,0)[lb]{\smash{{\SetFigFont{11}{13.2}{\familydefault}{\mddefault}{\updefault}{\color[rgb]{.5,.17,0}$b_i$}%
}}}}
\put(4126,-391){\makebox(0,0)[lb]{\smash{{\SetFigFont{11}{13.2}{\familydefault}{\mddefault}{\updefault}{\color[rgb]{1,0,0}$P_i$}%
}}}}
\put(4816,254){\makebox(0,0)[lb]{\smash{{\SetFigFont{10}{12.0}{\familydefault}{\mddefault}{\updefault}{\color[rgb]{1,0,0}$u$}%
}}}}
\put(5138,479){\makebox(0,0)[lb]{\smash{{\SetFigFont{10}{12.0}{\familydefault}{\mddefault}{\updefault}{\color[rgb]{1,0,0}$v$}%
}}}}
\put(5026,-91){\makebox(0,0)[lb]{\smash{{\SetFigFont{11}{13.2}{\familydefault}{\mddefault}{\updefault}{\color[rgb]{.69,0,.69}$F_i$}%
}}}}
\put(9001,-361){\makebox(0,0)[lb]{\smash{{\SetFigFont{11}{13.2}{\familydefault}{\mddefault}{\updefault}{\color[rgb]{0,0,0}inside $\Pi_2^{t}$}%
}}}}
\end{picture}%
}
  \caption{A swapping operation for the anchor $a_i-b_i$ in
    $\Pi_2^{t-1}$ (on the left) leading to a new embedding $\Pi_2^t$
    (on the right). Vertex $u$ is traversed before $v$ when going from
    $a_i$ to $b_i$ on $P_i$, and so for their twin vertices $u',v'$
    when going from $a'_i$ to $b'_i$ on $P'_i$. The vertices of $P_i$
    that were internal in $\Pi_2^{t-1}$ belong to the outerface of
    $\Pi_2^{t}$. Again, in this representation choice, the inside part (in gray), that is locally planar, may appear twisted if fully represented in the case where twin sides form a cross-cap in the surface.}
  \label{fig:swap}
\end{myfigure}

The last property of \cref{claim:splitAnchor} (no anchors are
twins) ensures that $a'_i-b'_i$ was not an anchor in
$\Pi_2^{t-1}$. So the swapping operation on the anchor $a_i - b_i$ leaves unchanged all the other anchors.

\def\iso{\cong}%

We denote by $\Pi_3$ the final embedding obtained from $\Pi_2$ by applying successively a swapping operation on each of its anchors. We write $A \iso B$ if $A$ and $B$ are homeomorphic embeddings on $\mS$.

\begin{claim}\label{claim:swapAnchor}
  The embedding $\Pi_3$ is a polygonal embedding p-major of $\Pi_2$ with $\sw(\Pi_3) \iso \sw(\Pi_2)$ and size $(m+2n,0)$.
\end{claim}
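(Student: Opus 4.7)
The plan is to verify the four assertions of the claim---polygonal embedding, p-major, isomorphism of sewings, and size $(m+2n,0)$---by induction on $t$, tracking what each swap does to $\Pi_2^{t-1}$. First I would show that each $\Pi_2^t$ remains a polygonal embedding with the same signature as $\Pi_2$. The swap for edge-anchor $a'_i-a''_i$ replaces a single edge on one side by the path $P_i$, and simultaneously replaces the twin edge $b'-b''$ by the copy $P'_i$ on the twin side; since $|V(P'_i)|=|V(P_i)|$ and we declare the twin correspondence vertex by vertex (with $a'_i \leftrightarrow b'$, $a''_i \leftrightarrow b''$, and each internal vertex of $P_i$ twin with its copy in $P'_i$), twin sides continue to have the same length and the sequence of symbols (hence $\sigma(\Pi_2^t)=\sigma(\Pi_2)$) is unchanged. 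The last property of \cref{claim:splitAnchor} guarantees that $b'-b''$ is not itself an edge-anchor and that no edge-anchor was subdivided during the twin process, so the swap of $a'_i-a''_i$ leaves every other edge-anchor intact and available for a subsequent swap.

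Next I would show that $\sw(\Pi_2^t) \iso \sw(\Pi_2^{t-1})$ at each step, which by composition yields $\sw(\Pi_3) \iso \sw(\Pi_2)$. The swap merely relocates the pocket bounded by $P_i \cup \{a'_i-a''_i\}$ and containing $F_i$: after the sewing identifies twin vertices on the two sides, the deleted edge $a'_i-a''_i$ coincides with the deleted edge $b'-b''$ (both reappear as the edge $a_i b_i$ in $\sw$, since $a'_i\leftrightarrow b'$ and $a''_i\leftrightarrow b''$), the internal vertices of $P_i$ coincide with the new vertices of $P'_i$, and the deleted $F_i$ coincides with the added $F'_i$. Thus no edge or vertex of $\sw(\Pi_2^{t-1})$ is lost and none is gained, only re-labelled. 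This isomorphism gives in particular that $\sw(\Pi_2)$ is a minor of $\sw(\Pi_3)$, and combined with the preservation of the signature this is exactly the statement that $\Pi_3$ is a p-major of $\Pi_2$.

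Finally, for the size bound, I would use that $P_1,\dots,P_k$ partition the internal vertices of $\Pi_2$ (by \cref{claim:splitAnchor}), so $\sum_i(|V(P_i)|-2)=2n-k$. Each swap removes the $|V(P_i)|-2$ internal vertices of $P_i$ from the interior and places a copy of them on each of the sides $s_i$ (containing $a'_i-a''_i$) and $\bar s_i$ (containing $b'-b''$). After all $k$ swaps, no internal vertex remains, which gives the second coordinate $0$; and any fixed side $s$ is enlarged by at most $\sum_{i\,:\,s\in\{s_i,\bar s_i\}}(|V(P_i)|-2)\le \sum_i(|V(P_i)|-2)=2n-k$ vertices on top of its initial $m+k$, for a total of at most $m+2n$. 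The main obstacle will be the bookkeeping, i.e., verifying carefully that at every intermediate step the twin relation is globally well-defined and that the swap does not accidentally disturb unrelated edge-anchors; this is where the non-incidence and non-twin properties of edge-anchors from \cref{claim:splitAnchor} do all the work.
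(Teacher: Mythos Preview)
Your proposal is correct and follows essentially the same route as the paper: induct on the swaps, check that each swap keeps a valid polygonal embedding with the same signature, that $\sw(\Pi_2^t)\iso\sw(\Pi_2^{t-1})$, and that the $|V(P_i)|-2$ interior vertices move to the two affected sides, with \cref{claim:splitAnchor} supplying the non-interference of edge-anchors and the count $\sum_i(|V(P_i)|-2)=2n-k$.

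One point to fix in your isomorphism step: the edge $b'-b''$ is \emph{not} deleted by the swap. In $\Pi_2^t$ it survives as an interior edge (now enclosed by $P'_i$), and it is precisely this surviving copy that, after sewing identifies $a'_i\sim b'$ and $a''_i\sim b''$, furnishes the edge in $\sw(\Pi_2^t)$ corresponding to the deleted $a'_i-a''_i$ of $\Pi_2^{t-1}$. If both copies were deleted, as you wrote, that edge would genuinely be missing from $\sw(\Pi_2^t)$ and the isomorphism would fail. With this correction your argument matches the paper's.
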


\begin{proof}
  Up to a permutation of the indices, we can assume that the swapping operation for $a_i-b_i$ is performed at step $i$ from $\Pi_2^{i-1}$. 
  It is easy to see that each operation does not create edge crossings, that the outerface remains a cycle, and that each vertex (edge) of each side has a well-defined twin vertex (edge) in its twin side. Therefore, $\Pi_2^i$ is a polygonal embedding, and also $\Pi_3$, which is equal to $\Pi_2^k$, by transitivity.
  
  To show that $\Pi_3$ has the claimed size, let $n_i = \card{V(P_i)\setminus\set{a_i,b_i}}$ denote the number of internal vertices of $P_i$ in $\Pi_2^{i-1}$. Note that this number is the same as the number of vertices of $P_i$ in $\Pi_2^0 = \Pi_2$ because $P_i$ is not altered by any of the previous steps $j<i$. Therefore, from \cref{claim:splitAnchor}, $\sum_{j=1}^k n_j = 2n-k$.
  
  The swapping operation for $a_i-b_i$ increases by $n_i$ the number of vertices of its side and its twin side, and, at the same time, decreases by $n_i$ the number of internal vertices of $\Pi_2^{i-1}$. By \cref{claim:splitAnchor}, $\Pi_2$ has size $(m+k,2n-k)$.
  So after step $i$, the size of $\Pi_2^{i}$ is $(m+k + \sum_{j=1}^i n_j, n-2k - \sum_{j=1}^i n_j)$. 
  It follows that $\Pi_3$, which is $\Pi_2^k$, has size $(m+2n,0)$ since $\sum_{j=1}^k n_j = 2n-k$.
  
  It is clear that signature and border, i.e., the sequence of corners, are not altered by any swapping operation. 
  SIt remains to show that $\sw(\Pi_3) \iso \sw(\Pi_2)$ (which will also establish that $\Pi_3$ is a p-major of $\Pi_2$).
  By transitivity, it suffices to show that $\sw(\Pi_2^{i}) \iso \sw(\Pi_2^{i-1})$.  Let $R_i$ be the plane graph composed of $P_i\cup\set{a_i-b_i}\cup F_i$ in $\Pi_2^{i-1}$ such that the boundary of its outerface is the cycle $P_i \cup\set{a_i-b_i}$. 
  Similarly, let $R'_i$ be the plane graph composed of $P'_i\cup\set{a'_i-b'_i}\cup F'_i$ in $\Pi_2^{i}$ with cycle outerface $P'_i \cup\set{a'_i-b'_i}$. Note that $V(R_i) = V(P_i)$ and $V(R'_i) = V(P'_i)$. For convenience, denote by $G^j = \sw(\Pi_2^{j})$ for $j\in\set{i-1,i}$. We want to show that $G^{i-1} \iso G^{i}$.
  
  Observe that in $G^{i-1}$ and in $G^{i}$, whenever the twin sides are merged, $a_i = a'_i$, $b_i = b'_i$ and thus the edge $a_i-b_i = a'_i-b'_i$ exists in both graphs (whereas $a_i-b_i$ exists only in $\Pi_2^{i-1}$). 
  Also, the way the vertices of $P_i$ are twins with the vertices $P'_i$ (by preserving the order when going from $a_i$ to $b_i$ in $P_i$), ensures that $P'_i = P_i$ and $F'_i = F_i$ in $G^i$. 
  It follows that $R_i \iso R'_i$, and since the swapping operation alters only $R_i$, $V(G^{i-1}) = V(G^i)$. Thus, if there is a difference between $G^{i-1}$ and $G^i$, it must be an edge.
  
  From previous equalities, if an edge $x-y$ lies in $R_i$ then the edge exists in $R'_i$ and thus in $G^i$. Conversely, if $x-y$ lies in $R'_i$, the edge exists in $R_i$ and thus in $G^{i-1}$. 
  If $x-y$ lies outside $R_i$ ($x,y$ may be both in $V(R_i)$ with an embedding not inside $R_i$), then $x-y$ lies also in $\Pi_2^i$ since the swapping operation alters only $R_i$ (and its inside).
  So $x-y \in E(G^{i-1})$ implies $x-y \in E(G^i)$ (even if $x-y$ belongs to some side of $\Pi_2^{i-1}$). 
  Conversely, if $x-y$ lies outside $R'_i$ in $\Pi_2^i$, then $x-y$ lies also in $\Pi_2^{i-1}$ (and outside of $R_i$). It follows that the edge sets of $G^{i-1}$ and $G^i$ are the same, proving that $G^{i-1} \iso G^i$.
\end{proof}

By transitivity of the p-minor relation, every polygonal embedding $\Pi$ of size $(m,n)$ is a p-minor of $\Pi_2$ (by \cref{claim:splitAnchor}) that is a p-minor of $\Pi_3$ of size $(m+2n,0)$ (by \cref{claim:swapAnchor}), which completes the proof of \cref{lem:outer}.

\subsection{Proof of \cref{lem:final}}
\label{subsec:proofLemFinal}

The goal of this section is to show that:

\lemfinal*

The proof is an adaptation of the proof of Robertson, Seymour and Thomas~\cite[Theorem~(1.3)]{RST94} to the case of outerplanar polygonal embeddings. While the original proof relies on grid majors, our setting requires the use of $\PP$.

We first recall the construction of the polygonal embedding $\PP$, which is defined for any signature $\sigma$ and integer $m\in\mathbb{N}$, and introduce the notation for its vertices.
To construct $\PP$, we start from a square half-grid of dimension $|\sigma|m$, that is with $|\sigma|m$ rows and $|\sigma|m$ columns. 
The vertices are column-row pairs of integers $(i,j)$ where $0\le j\le i < |\sigma|m$, and $(i,j)$ and $(i',j')$ are adjacent if and only if $|i-i'| + |j-j'| = 1$. 
Then, the vertices of the diagonal are connected by a cycle, $(i,i)$ being connected to $(i-1,i-1)$ modulo $|\sigma|m$. This cycle forms the cycle outerface of $\PP$. 
Finally, every $m$ edges along this cycle is subdivided by one vertex, a corner. More precisely, for each $i \in\range{0}{|\sigma|-1}$, we add a vertex $c_i$ between $(im,im)$ and $(im-1,im-1)$ modulo $|\sigma|m$. 
The border of $\PP$ is the sequence $(c_0,\dots,c_{|\sigma|-1})$, and the $i$th side of $\PP$, associated with the $i$th symbol of $\sigma$, is the path between $c_i$ and $c_{i+1 \bmod |\sigma|}$ on the cycle outerface. See \cref{fig:pu} and \cref{fig:PU} for illustrations.

\begin{myfigure}
  \scalebox{1}{\begin{picture}(0,0)%
\includegraphics{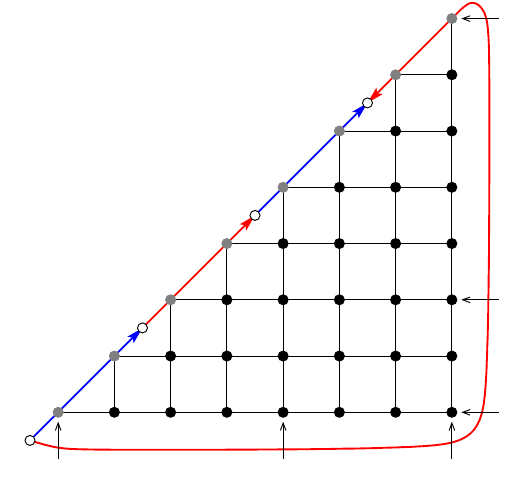}%
\end{picture}%
\setlength{\unitlength}{3947sp}%
\begingroup\makeatletter\ifx\SetFigFont\undefined%
\gdef\SetFigFont#1#2#3#4#5{%
  \reset@font\fontsize{#1}{#2pt}%
  \fontfamily{#3}\fontseries{#4}\fontshape{#5}%
  \selectfont}%
\fi\endgroup%
\begin{picture}(4080,3882)(136,-3046)
\put(2351,-2986){\makebox(0,0)[lb]{\smash{{\SetFigFont{9}{10.8}{\familydefault}{\mddefault}{\updefault}{\color[rgb]{0,0,0}$i$}%
}}}}
\put(3526,-2986){\makebox(0,0)[lb]{\smash{{\SetFigFont{9}{10.8}{\familydefault}{\mddefault}{\updefault}{\color[rgb]{0,0,0}$|\sigma|m-1$}%
}}}}
\put(4201,664){\makebox(0,0)[lb]{\smash{{\SetFigFont{9}{10.8}{\familydefault}{\mddefault}{\updefault}{\color[rgb]{0,0,0}$|\sigma|m-1$}%
}}}}
\put(2251,-1711){\makebox(0,0)[lb]{\smash{{\SetFigFont{9}{10.8}{\familydefault}{\mddefault}{\updefault}{\color[rgb]{0,0,0}$(i,j)$}%
}}}}
\put(1576,-1261){\makebox(0,0)[lb]{\smash{{\SetFigFont{11}{13.2}{\familydefault}{\mddefault}{\updefault}{\color[rgb]{1,0,0}$a_2$}%
}}}}
\put(676,-2161){\makebox(0,0)[lb]{\smash{{\SetFigFont{11}{13.2}{\familydefault}{\mddefault}{\updefault}{\color[rgb]{0,0,1}$a_1$}%
}}}}
\put(2476,-361){\makebox(0,0)[lb]{\smash{{\SetFigFont{11}{13.2}{\familydefault}{\mddefault}{\updefault}{\color[rgb]{0,0,1}$a_1$}%
}}}}
\put(3376,539){\makebox(0,0)[lb]{\smash{{\SetFigFont{11}{13.2}{\familydefault}{\mddefault}{\updefault}{\color[rgb]{1,0,0}$a_2$}%
}}}}
\put(151,-2611){\makebox(0,0)[lb]{\smash{{\SetFigFont{11}{13.2}{\familydefault}{\mddefault}{\updefault}{\color[rgb]{0,0,0}$c_0$}%
}}}}
\put(1051,-1711){\makebox(0,0)[lb]{\smash{{\SetFigFont{11}{13.2}{\familydefault}{\mddefault}{\updefault}{\color[rgb]{0,0,0}$c_1$}%
}}}}
\put(1951,-811){\makebox(0,0)[lb]{\smash{{\SetFigFont{11}{13.2}{\familydefault}{\mddefault}{\updefault}{\color[rgb]{0,0,0}$c_2$}%
}}}}
\put(2851, 89){\makebox(0,0)[lb]{\smash{{\SetFigFont{11}{13.2}{\familydefault}{\mddefault}{\updefault}{\color[rgb]{0,0,0}$c_3$}%
}}}}
\put(551,-2986){\makebox(0,0)[lb]{\smash{{\SetFigFont{9}{10.8}{\familydefault}{\mddefault}{\updefault}{\color[rgb]{0,0,0}$0$}%
}}}}
\put(4201,-1586){\makebox(0,0)[lb]{\smash{{\SetFigFont{9}{10.8}{\familydefault}{\mddefault}{\updefault}{\color[rgb]{0,0,0}$j$}%
}}}}
\put(4201,-2486){\makebox(0,0)[lb]{\smash{{\SetFigFont{9}{10.8}{\familydefault}{\mddefault}{\updefault}{\color[rgb]{0,0,0}$0$}%
}}}}
\end{picture}%
}
  \caption{The polygonal embedding $\PP$ for $m = 2$ and $\sigma = a_1 a_2 a_1 \olsi{a}_2$, a non-canonical signature for a non-orientable surface of Euler genus~$2$ (the Klein bottle). Its size is $(2,28)$. The cycle outerface is $c_0 - (0,0) - (1,1) - c_1 - (2,2) - (3,3) - c_2- (4,4) - (5,5) - c_3 - (6,6) - (7,7)$, where $c_i$'s are the four corners of $\PP$.}
  \label{fig:pu}
\end{myfigure}

Let $\Pi$ be a polygonal embedding of size $(m,0)$.  Up to taking a p-major of $\Pi$, we can suppose that each side of $\Pi$ is composed of exactly $m$ non-corner vertices.

The embeddings $\Pi$ and $\PP$ share the same signature and each have $m$ non-corner vertices per side. Consequently, both embeddings have isomorphic borders and a cycle outerface consisting of $\card{\sigma(\Pi)}(m+1)$ vertices.

We denote by $(u_0,\dots,u_{|\sigma(\Pi)|-1})$ the border of $\Pi$, and by $v_0,\dots,v_{|\sigma(\Pi)|m-1}$ its non-corners vertices ordered clockwise around its cycle outerface, so that this cycle is
 \[
  u_0 - v_0 - v_1 - \cdots - v_{m-1} - u_1 - v_m - \cdots - v_{im-1} - u_i - v_{im} - \cdots
\]
Up to taking a p-major of $\Pi$, we will assume that all inner faces of $\Pi$ are triangulated such that corners have degree two. 
For each non-corner $v_i$, we define the two indices $a(i),b(i)$ such that $\set{v_k : k\in [a(i),b(i)] }$ is the minimal subset of vertices containing $v_i$ and all its non-corner neighbors. 
Note that $a(i) = i$ or $b(i) = i$ is possible. For technical reasons, we increase $b(0)$ to $b(0) = |\sigma(\Pi)|m$. 

In order to show that $\Pi$ is a minor of $\PP$, we construct a \emph{witness} of $\Pi$ in $\PP$ with the property that it preserves its outerface and its border. 
A witness for a minor $\Pi$ in $\PP$ is a collection $\set{W(u)}_{u\in V(\Pi)}$ of nonempty pairwise disjoint subsets of vertices of $\PP$ such that each set $W(u)$ induces a connected subgraph of $\PP$, and for each edge $u-v$ of $\Pi$, $W(u) \cup W(v)$ induces a connected component in $\PP$.

The witness of $\Pi$ in $\PP$ is defined by:
\begin{itemize}[noitemsep]
  \item $W(u_i) = \set{c_i}$ for each corner $u_i$; and
  \item $W(v_i) = \set{(i,j)} \cup C(v_i) \cup R(v_i)$ for each non-corner $v_i$, \\
  where $C(v_i) = \set{ (i,k) : k\in (a(i),i] }$ and $R(v_i) = \set{ (k,i) : k \in [i,b(i)) }$. %
\end{itemize}

Clearly, for corner vertices, $W(u_i) = \set{c_i}$ satisfies all required properties: non-emptiness, pairwise disjointness, and connectedness. For non-corner vertices, $W(v_i)$ induces a connected component of $\PP$ since it is the union of a column subpath $C(v_i)$ and of a row subpath $R(v_i)$ of $\PP$ intersecting in $(i,i)$ (in the case where $C(v_i)$ and $R(v_i)$ are not empty).

Let us check that $W(v_i)$ and $W(v_j)$ are pairwise disjoint (witness for corner and non-corner are clearly disjoint). 
Assume $j<i$. A non-empty intersection is only possible between $C(v_i)$ and $R(v_j)$. Moreover, this can only occur at $(i,j)$. 
Note that the case $j = 0$ provides disjoint witnesses, since $(i,0) \notin C(v_i)$. If $(i,j) \in C(v_i)$, then $j\in(a(i),i]$, and thus $v_i$ has a neighbor $v_{a(i)}$ with $a(i) < j$. If $(i,j) \in R(v_j)$, then $i\in [j,b(j))$, and thus $v_j$ has a neighbor $v_{b(j)}$ with $i < b(j)$. 
It follows that $a(i) < j < i < a(j)$ and that $v_{a(i)} - v_i$ and $v_j - v_{b(j)}$ is a pair of crossing edges, which is impossible in the outerplanar embedding $\Pi$.

It remains to check that, for each $u-v\in E(\Pi)$, $W(u) \cup W(v)$ induces a connected graphs in $\PP$. 
We can restrict our attention to non-corner neighbors, since for a corner $u_i$, $c_i \in W(u_i)$ have exactly two neighbors that are $(im-1,im-1) \in W(v_{im-1})$ and $(im,im) \in W(v_{im})$ (indices modulo $|\sigma(\Pi)|m$).

So consider an edge $v_i-v_j$ of $\Pi$ with $j<i$. Let $\sW_{i,j}$ be the set of vertices of the path connecting $(i,i) \in W(v_i)$ to $(j,j) \in W(v_j)$ inside the grid part of $\PP$, and defined by
\[
  \sW_{i,j} ~=~ \set{ (i,i), (i,i-1), \dots, (i,j+1), ~(i,j),~  (i-1,j), \dots,(j+1,j), (j,j) } ~.
\]
To show that $W(v_i) \cup W(v_j)$ induces a connected component in $\PP$, it suffices to show that $\sW_{i,j}\subseteq C(v_i) \cup R(v_j)$ since $\sW_{i,j}$ induces a path in $\PP$, and $C(v_i) \subset W(v_i)$ and $R(v_j) \subset W(v_j)$. 
The adjacency of $v_i$ and $v_i$ implies that $b(j)\geq i$ and $a(i)\leq j$, thus all the vertices $(i',j')$ for some $i'\in[j,i)$ or $j'\in[j,i)$ belong to $R(v_j) \subset W(v_j)$. 

It remains to check that $(i,j) \in C(v_i) \cup R(v_j)$, even if $(i,j) \notin C(v_i)$ or $(i,j) \notin R(v_j)$ is possible. If $j=0$, then we are done because we have set $b(0) = |\sigma(\Pi)|m$ and thus $(i,0) \in R(v_0)$. 
We are also done if $|i-j| = 1$ since in this case $(i,i) \in W(v_i)$ and $(j,j)\in W(v_j)$ are neighbors in $\PP$. 
Since the inner faces of $\Pi$ are triangulated, the edge $v_i-v_j$ (that lies inside $\Pi$ since $v_i$ and $v_j$ are not consecutive in the cycle outerface) shares a triangle $v_k-v_j-v_i$ such that $k \notin[j,i]$, the case $j = 0$ being excluded. 
If $k<j$, then because of the edge $v_i-v_k$, $a(i)\le k < j < i$, that is $j \in (a(i),i]$, and thus $(i,j) \in C(v_i)$. If $k>i$, then because of the edge $v_j-v_k$, $b(j)\ge k > i > j$, that is $i\in [j,b(j))$, and thus $(i,j) \in R(v_j)$.

We have therefore proved that $\sW_{i,j}\subseteq W(v_i) \cup W(v_j)$ and that $W(v_i)\cup W(v_j)$ induces a connected component in $\PP$. It follows that $\Pi$ is a minor of $\PP$ with same outerface, border and signature.  By \cref{property:p-minorOuterface}, $\Pi$ is a p-minor of $\PP$ that completes the proof of \cref{lem:final}.


\section{Conclusion}

We have shown that the family of Euler genus-$g$ graphs have an Euler genus-$g$ minor-universal graph of $O(g^2(n+g)^2)$ vertices. This polynomial construction appears to be a key point in the recent polynomial bounds for the Graph Minor Structure Theorem due to Gorsky, Seweryn and Wiederrecht~\cite{GSW25a,GSW25}.

It is not clear whether the dependencies in $g$ and in $n$ are tight. Even in the planar case, it is not obvious that the $n^2$ term is required, as the known $\Omega(n^2)$ lower bound applies only under the restriction that the minor-universal graph is a grid.

It would be interesting to extend this result to other graph families, in particular bounded tree-width graphs, $k$-planar graphs, and minor-free graphs. Another direction could be to extend the result to topological minor-universal graphs.

\paragraph{Acknowledgement.}

We are grateful to Arnaud de Mesmay for prelimary discussions and helpful suggestions regarding this problem, and to Niloufar Fuladi, Sebastian Wiederrecht and Maximilian Gorsky for complementary discussions.


\bibliographystyle{my_alpha_doi}
\bibliography{biblio}

\end{document}